\documentclass[11pt]{article}
\usepackage{amsthm}
\usepackage{amssymb}
\usepackage{amsfonts}
\usepackage{amsmath}
\usepackage{mathrsfs}
\usepackage{mathtools}
\usepackage{lmodern}
\usepackage{graphicx}
\usepackage{listings}

\usepackage{algorithm}
\usepackage{algorithmic}
\usepackage{color,xcolor}
\usepackage[inline]{enumitem}
\usepackage{multirow}
\usepackage{diagbox}
\usepackage[colorlinks]{hyperref}
\usepackage{cleveref}
\usepackage{makecell}
\usepackage{verbatim}
\usepackage{geometry}
\usepackage{setspace}
\usepackage{caption}
\usepackage{tikz}
\usepackage{tikz-3dplot}
\usepackage{pgfplots}
\geometry{margin=1in}
\pgfplotsset{compat=1.17}
\usetikzlibrary{arrows.meta}
\setstretch{1.00}
\usepackage{bbding}
\usepackage{threeparttable}

\allowdisplaybreaks[3]

\crefname{construction}{Construction}{Constructions}

\newcommand{\R}{\mathbb{R}}
\newcommand{\A}{\mathcal{A}}

\newcommand{\norm}[1]{\left\lVert #1\right\rVert}
\newcommand{\inn}[2]{\left\langle {#1},{#2}\right\rangle}

\newcommand{\poly}{\mathrm{poly}}
\newcommand{\rank}{\mathrm{rank}}
\newcommand{\grad}{\mathrm{grad}}
\newcommand{\diff}{\mathrm{d}}

\theoremstyle{plain}
\newtheorem{theorem}{Theorem}[section]
\newtheorem{lemma}{Lemma}[section]
\newtheorem{proposition}{Proposition}[section]
\newtheorem{corollary}{Corollary}[section]

\theoremstyle{definition}
\newtheorem{definition}{Definition}[section]
\newtheorem{example}{Example}[section]

\theoremstyle{remark}
\newtheorem{remark}{Remark}[section]

\title{On Convergence Lemma and Convergence Stability for Piecewise Analytic Functions}
\author{Xiaotie Deng\\ xiaotie@pku.edu.cn \\ Peking University
\and 
Hanyu Li\\ lhydave@pku.edu.cn \\ Peking University
\and 
Ningyuan Li \\ liningyuan@pku.edu.cn \\ Peking University
}

\begin{document}
\maketitle
\begin{abstract}
In this work, a convergence lemma for function $f$ being finite compositions of analytic mappings and the maximum operator is proved. The lemma shows that the set of $\delta$-stationary points near an isolated local minimum point $x^*$ is shrinking to $x^*$ as $\delta\to 0$. It is a natural extension of the version for strongly convex $C^1$ functions. However, the correctness of the lemma is subtle. Analytic mappings are necessary for the lemma in the sense that replacing it with differentiable or $C^\infty$ mappings makes the lemma false. The proof is based on stratification theorems of semi-analytic sets by {\L}ojasiewicz. An extension of this proof presents a geometric characterization of the set of stationary points of $f$. Finally, a notion of stability on stationary points, called convergence stability, is proposed. It asks, under small numerical errors, whether a reasonable convergent optimization method started near a stationary point should eventually converge to the same stationary point. The concept of convergence stability becomes nontrivial qualitatively only when the objective function is both nonsmooth and nonconvex. Via the convergence lemma, an intuitive equivalent condition for convergence stability of $f$ is proved. These results together provide a new geometric perspective to study the problem of ``where-to-converge'' in nonsmooth nonconvex optimization.
\end{abstract}
\section{Introduction}
We consider the following optimization problem
\[\mathrm{Minimize}\quad f(x),\quad x\in\Omega,\]
where $f$ is finite compositions of the maximum operator and analytic mappings, and $\Omega$ is a subset of Euclidean space $\R^n$. Function $f$ is usually nonconvex everywhere and non-differentiable on a zero-measure set. It satisfies the assumptions in many works on general nonsmooth nonconvex optimization \cite{bagirov2014introduction,Burke2002GradientSampling,BurkeRobustGradientSampling2005,KiwielBundle,DBLP:journals/siamjo/Kiwiel07}. Although $f$ seems more specific and regular than functions considered in the literature, such an objective function involves a majority of popular loss functions in deep learning. Indeed, most deep neural networks are comprised of ingredients being either an analytic mapping or a maximum operator, such as convolutions \cite{Krizhevsky17AlexNet,Lecun98CNN}, residual networks \cite{He16Residual}, normalization operators \cite{Ioffe15batchnorm}, pooling operators \cite{DBLP:journals/corr/abs-2009-07485} and attentions \cite{Vaswani17Attention}. 

A series of widely used first-order methods have been developed for problems of this kind, including gradient descent methods \cite{Curry1944Steepest}, standard subgradient methods \cite{Shor85subgradient}, and gradient sampling methods \cite{Burke2002GradientSampling}. Most algorithms aim at local minimization, i.e., to find a \emph{stationary point} or a \emph{$\delta$-stationary point}. In smooth settings, a stationary point $x$ is usually defined to be the one with zero gradient \cite{boyd2004convex}. For nonsmooth functions, however, differentiable points are usually with nonzero gradients. Even worse, local minima are usually non-differentiable. The standard definitions of stationary points for nonsmooth functions are given by Rockafellar \cite{Rockafellar+2015} and Clarke \cite{clarke1973necessary, clarke1990optimization}, based on subdifferentials. Naturally, function $f$ fits these definitions. 

One main concern about these algorithms is their convergence property: Does the iteratively generated point sequence converge to a stationary point? For strongly convex $C^1$ functions with minimum points, the convergence theory is well-developed. There, the main property to guarantee a convergence is that the decreasing of the function value is equivalent to the decreasing of the gradient norm \cite{boyd2004convex}. Alternatively, $\delta$-stationary points converge to the unique stationary point as $\delta\to 0$. However, this is not always true for nonconvex functions: Stationary points are usually not unique. Even when we restrict to a neighborhood of an isolated stationary point, this claim can fail for nonsmooth functions since the subdifferential may not be continuous. Convergence analyses for nonsmooth nonconvex functions have been developed for specific iterative methods, e.g., the gradient sampling methods \cite{Burke2002GradientSampling,BurkeRobustGradientSampling2005,DBLP:journals/siamjo/HosseiniU17,DBLP:journals/siamjo/Kiwiel07}. Yet no systematic and general approaches are devised.

In this work, we prove the following convergence lemma, extending the basic convergence result in strongly convex $C^1$ cases.

\begin{theorem}[Convergence lemma]\label{thm:analytic-abs-delta-SP-converge}
Suppose $f(x)$ is finite compositions of analytic mappings and the maximum operator defined on a cube $\Omega$.\footnote{$\Omega$ is only required to be a locally compact semi-analytic set. To avoid technical details, we here replace it with a cube.} If stationary point $x^*$ is an isolated local minimum point, then there exists a radius $r_1>0$ such that for any radius $\varepsilon>0$, there exists $\delta_0>0$, such that every $\delta$-stationary point in ball $B(x^*,r_1)\cap\Omega$ is in ball $B(x^*,\varepsilon)\cap\Omega$ whenever $\delta<\delta_0$.
\end{theorem}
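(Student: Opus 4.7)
The plan is to argue by contradiction and first reduce the theorem to showing that $x^*$ is an \emph{isolated stationary point}, not merely an isolated local minimum. Suppose the conclusion fails for every $r_1$; then one can extract a sequence $(x_n,\delta_n)$ with $\delta_n\to 0$, $x_n$ a $\delta_n$-stationary point in $B(x^*,r_1)\cap\Omega$, and $\|x_n-x^*\|\geq\varepsilon$ for some fixed $\varepsilon>0$. By compactness, a subsequence converges to some $\bar{x}$ with $\|\bar{x}-x^*\|\geq\varepsilon$. Since $f$ is locally Lipschitz, the Clarke subdifferential $\partial f$ is upper semicontinuous with nonempty compact values, so selecting $v_n\in\partial f(x_n)$ with $\|v_n\|<\delta_n$ and passing to a further subsequence gives $v_n\to 0\in\partial f(\bar{x})$. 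Thus $\bar{x}$ is a stationary point distinct from $x^*$, and it suffices to prove that some $r_1>0$ makes $x^*$ the only stationary point in $\overline{B(x^*,r_1)}\cap\Omega$.

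Suppose instead that $x^*$ is an accumulation point of the stationary set $S=\{x\in\Omega:0\in\partial f(x)\}$. Because $f$ is a finite composition of analytic maps and $\max$, the Clarke subdifferential admits an explicit description in terms of the active ``max''-branches, from which $S$ is exhibited as a (sub)analytic set. I will then apply \L{}ojasiewicz's stratification theorem to partition $S$ into finitely many analytic manifolds, and the curve selection lemma furnishes a real-analytic arc $\gamma:[0,\eta)\to\Omega$ with $\gamma(0)=x^*$, $\gamma(t)\neq x^*$ for $t\in(0,\eta)$, and $\gamma((0,\eta))$ contained in a single stratum $M\subseteq S$ on which the collection of active analytic pieces of $f$ is constant.

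The crux of the argument---and the step I expect to be most delicate---is to show that $f\circ\gamma$ is constant on $[0,\eta)$. Let $f_{i_1},\ldots,f_{i_k}$ be the active analytic pieces along $M$; each realises $f$ on $M$, so they agree pointwise there and $f|_M$ is analytic. Hence all tangential derivatives $\langle\nabla f_{i_j}(\gamma(t)),\gamma'(t)\rangle$ coincide with the single scalar $a(t)=\tfrac{\diff}{\diff t}(f\circ\gamma)(t)$. The Clarke subdifferential at $\gamma(t)$ equals $\mathrm{conv}\{\nabla f_{i_j}(\gamma(t))\}$, and stationarity supplies nonnegative weights $\lambda_j$ summing to one with $\sum_j\lambda_j\nabla f_{i_j}(\gamma(t))=0$; inner-producting with $\gamma'(t)$ then yields $a(t)\equiv 0$, so $f\circ\gamma\equiv f(x^*)$.

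The proof concludes as follows: for small $t>0$, $\gamma(t)\in B(x^*,r_1)$, and since $x^*$ is a local minimum on this ball, $f\geq f(x^*)=f(\gamma(t))$ in a neighborhood of $\gamma(t)$; hence each $\gamma(t)$ is itself a local minimum with $\gamma(t)\neq x^*$, contradicting the hypothesis that $x^*$ is an \emph{isolated} local minimum. The principal technical hurdles are (i) extracting a workable semi-analytic description of $S$ from the max-composition structure of $f$, and (ii) refining the stratification so that a single set of active pieces remains active along the chosen stratum---it is precisely at these points that the full power of \L{}ojasiewicz's stratification theorem enters, and the distinction between analyticity and mere smoothness becomes essential.
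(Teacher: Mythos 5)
Your route is genuinely different from the paper's: the paper never reduces to isolatedness of stationary points, but instead introduces ``limit stationary points'' (limits of $d_k$-stationary points with $d_k\to 0$), shows via Whitney-(a) stratification of the sign-pattern pieces that all such points lie in a semi-analytic set $P$ on whose components $f$ is constant, and derives the contradiction directly --- thereby avoiding any subdifferential calculus for the composite structure. Your plan (Clarke upper semicontinuity to reduce to ``$x^*$ is an isolated Clarke stationary point'', then subanalyticity of the stationary set, stratification with constant active branches, curve selection, and constancy of $f$ along the arc) is closer in spirit to the paper's Section 5 and is plausible in outline; note also that your reduced statement is strictly stronger than what the paper establishes, and Example 3.1 of the paper (where the points $1/(2k\pi)$ are Clarke stationary and accumulate at the isolated local minimum $0$) shows that proving it genuinely requires the analyticity you invoke, so the entire burden falls on your second step.

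Two concrete gaps remain. First, the step ``the Clarke subdifferential at $\gamma(t)$ equals $\mathrm{conv}\{\grad f_{i_j}(\gamma(t))\}$ of the active pieces'' is asserted, not proved, and as an equality it is false in general (branches active only on thin sets near the point need not contribute, and for nested compositions $f$ is not globally a pointwise max of its analytic branches, e.g.\ $-\max$ produces $\min$). What your argument actually needs is the inclusion $\bar{\partial}f(x)\subseteq\mathrm{conv}\{\grad f^s(x): f^s(x)=f(x)\}$ together with subanalyticity of a set of the form $\{x: 0\in\mathrm{conv}\{\grad f^s(x):f^s(x)=f(x)\}\}$; this inclusion is believable for this function class but requires a careful induction over the composition (including degenerate sign patterns where intermediate variables vanish on sets of positive measure) --- it is exactly the technical content that the paper's Whitney-(a)/limit-stationary-point machinery is designed to bypass. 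Second, the domain is a cube and $\delta$-stationarity is a condition on \emph{feasible} directions only, so your selection of $v_n\in\bar{\partial}f(x_n)$ with $\norm{v_n}<\delta_n$ fails at boundary points (for $f(x)=x$ on $[0,1]$, $x=0$ is a stationary point while $\mathrm{dist}(0,\bar{\partial}f(0))=1$); both the u.s.c.\ reduction and the definition of the stationary set $S$ must be replaced by the constrained condition $0\in\bar{\partial}f(x)+N_\Omega(x)$ (or a tangent-cone directional condition), and the curve-selection/constancy argument must then respect the faces of the cube. The paper's proof handles this automatically because its strata lie inside $\Omega$ and it only ever differentiates along directions tangent to a stratum; in your setup it is a fixable but unaddressed hole.
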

Less formally, \Cref{thm:analytic-abs-delta-SP-converge} means that the set of $\delta$-stationary points near any isolated local minimum point is guaranteed to shrink when $\delta$ is getting smaller. Hence in the meaning of gradient norms, function $f$ locally behaves like a strongly convex $C^1$ function at an isolated local minimum point.

\Cref{thm:analytic-abs-delta-SP-converge} heavily relies on the analyticity property. For functions being finite compositions of the maximum operator and differentiable (or even $C^\infty$) mappings\footnote{It means the mapping is differentiable or infinitely differentiable.}, counterexamples exist (\Cref{ex:nonex-for-converge-stable-C1} and \Cref{ex:nonex-for-converge-stable}). Even when we in addition require $x^*$ to be an isolated stationary point, \Cref{thm:analytic-abs-delta-SP-converge} still fails to hold in these examples. Such subtlety suggests that analyticity is to some extent necessary for this intuitive convergence lemma. 

The main idea of the proof is to partition the domain into \emph{finitely many} regions so that \emph{any} sequence of $\delta$-stationary points with $\delta\to 0$ has a subsequence lying in some particular well-conditioned region. As $\delta\to 0$, the limit behavior of $\delta$-stationary points is then well confined. To utilize analyticity, we carefully partition the domain into regions being both \emph{semi-analytic sets} and \emph{analytic manifolds}. The technique we use is \emph{stratification theorems}, proposed by Whitney \cite{WhitneyTangents} for complex analytic sets and extended to real semi-analytic sets by {\L}ojasiewicz \cite{lojasiewicz1965ensembles, lojasiewicz1995semi}.

Inspired by Bolte, Daniilidis and Lewis's work \cite{DBLP:journals/siamjo/BolteDL07}, we further extend our proof and characterize the geometric properties of the set of \emph{all} stationary points of function $f$. Specifically, we have the following theorem.
\begin{theorem}[Geometric characterizations on stationary points]\label{prop:geometric-char-of-sp}
Stationary points of $f$ form a finite union of connected subsets $A_i$, each sharing \textbf{all} the following properties:
\begin{enumerate}[label=\roman{*}.]
    \item $A_i$ is a sub-analytic set as well as an analytic submanifold (possibly a single point).
    \item Each point in $A_i$ is a stationary point.
    \item $f$ is a constant on $\bar{A}_i$, the closure of $A_i$.
    \item If $x^*$ is a non-isolated local minimum point, then there exists a neighborhood $U$ of $x^*$, such that for every $A_i$ and every $x\in A_i\cap U$, $x$ is also a non-isolated local minimum point.
\end{enumerate}
\end{theorem}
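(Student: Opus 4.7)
The plan is to view the stationary set $S=\{x\in\Omega : 0\in\partial f(x)\}$ as a semi-analytic object, decompose it using Łojasiewicz's stratification theorem, and then read off the four properties from the resulting strata combined with a projection formula for Clarke subdifferentials along Whitney strata. Since $f$ is a finite composition of analytic maps and the maximum operator, $\Omega$ admits a natural partition into finitely many semi-analytic cells on each of which $f$ coincides with one ``active'' analytic branch $f_\alpha$. The Clarke subdifferential at $x$ is then the convex hull of $\{\nabla f_\alpha(x)\}$ over indices $\alpha$ active at $x$, so $0\in\partial f(x)$ translates into the existence of nonnegative coefficients $\lambda_\alpha$ summing to one with $\sum_\alpha \lambda_\alpha \nabla f_\alpha(x)=0$. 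This system of semi-analytic conditions exhibits $S$ as a semi-analytic set.

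Applying Łojasiewicz's stratification theorem to $S$, compatibly with the active-branch partition, yields a finite family of analytic submanifolds that are themselves semi-analytic; taking their connected components defines the $A_i$. Properties (i) and (ii) are then immediate from the construction. For (iii), since each $A_i$ sits inside a single active-branch cell, $f|_{A_i}$ is the restriction of an analytic function and hence is itself analytic. I would then invoke the projection formula in the spirit of Bolte, Daniilidis and Lewis: along a Whitney-regular stratum $A_i$, the orthogonal projection of $\partial f(x)$ onto $T_x A_i$ equals $\{\nabla_{A_i} f(x)\}$. Because $0\in\partial f(x)$ at every $x\in A_i$, the Riemannian gradient vanishes identically on the connected analytic manifold $A_i$, so $f|_{A_i}$ is constant and, by continuity, constant on $\bar{A}_i$. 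I expect this projection-formula step to be the main obstacle, since adapting the semi-algebraic argument of \cite{DBLP:journals/siamjo/BolteDL07} requires verifying that the active-branch partition gives a Whitney-regular refinement of the Łojasiewicz stratification of $S$, which is where the analytic (as opposed to merely smooth) structure is essential.

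For (iv), choose a neighborhood $V$ of $x^*$ with $f\geq f(x^*)$ on $V$. By finiteness of the stratification, I can shrink to a neighborhood $U\subseteq V$ such that the only $A_i$ meeting $U$ are those with $x^*\in\bar{A}_i$; by (iii) each such $A_i$ satisfies $f\equiv f(x^*)$. For any $x\in A_i\cap U$, a further shrinkage places a full neighborhood of $x$ inside $V$ on which $f\geq f(x^*)=f(x)$, so $x$ is a local minimum. Non-isolation then holds because either $A_i$ has positive dimension, in which case nearby points of $A_i$ are also local minima by the same argument, or else $A_i=\{x^*\}$ with $x=x^*$, which is non-isolated by the hypothesis of (iv).
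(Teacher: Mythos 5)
There is a genuine gap, and it starts at the very first step: you decompose the wrong set. The paper's stationary points are defined by directional derivatives, $\partial f(x)/\partial s\geq 0$ for every direction $s$, which is in general a \emph{strict} subset of the Clarke-stationary points $\{x:0\in\bar\partial f(x)\}$ that your proposal characterizes. A one-line counterexample inside the paper's function class is $f(x)=-\max(x,-x)=-|x|$: at $x=0$ both branches are active with gradients $\pm 1$, so $0\in\mathrm{conv}\{-1,1\}$ and your set $S$ contains $0$, yet $Df(0,x')=-|x'|<0$, so $0$ is not a stationary point in the paper's sense. Consequently your $A_i$'s cover a strictly larger set and property (ii) fails as stated; the identity ``Clarke subdifferential $=$ convex hull of active-branch gradients'' is itself only an inclusion for nested compositions (it is a theorem for a single max of $C^1$ functions, not for analytic-of-max compositions, as the $-|x|$ example shows when one tries to read stationarity off it). To capture the paper's stationary set one has to handle the quantifier ``for all directions,'' which is exactly why the paper works with $G_f(x)=\inf_s\partial f(x)/\partial s$ and proves it is \emph{sub}-analytic via first-order definability (\Cref{lemma:sub-analytic-first-order-formula}, \Cref{lemma:f-is-sub-analytic}); note also that your existential description by convex coefficients $\lambda_\alpha$ is a projection of a semi-analytic set, which only yields sub-analyticity, not the semi-analyticity you claim (harmless for the statement, but the claim is unjustified).

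The second gap is the one you flag yourself: property (iii) in your route rests on a Bolte--Daniilidis--Lewis-type projection formula for the Clarke subdifferential along a Whitney-regular stratification compatible with the active-branch partition, and you do not prove it (nor the required Whitney regularity of your refinement). The paper does not need any such formula: it observes that every stationary point is a limit stationary point and hence lies in the semi-analytic set $P=\bigcup A^*$ built in the proof of the convergence lemma (\Cref{lemma:x0-in-A*}), and that $f$ is constant on the closure of each connected component of $P$ (\Cref{lemma:constant-on-hatA}, proved by choosing a semi-analytic curve and integrating $\grad f^s$ along it, where the gradient vanishes tangentially by the rank-zero condition defining $A^*$). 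Then $S\subseteq P$, a sub-analytic stratification of $S$ gives the $A_i$, and constancy on $\bar A_i$ is inherited from the component of $P$ containing $A_i$. Your argument for (iv) is essentially sound once (i)--(iii) are in place (and matches the paper's, including the observation that a singleton stratum whose closure contains $x^*$ must be $\{x^*\}$), but as it stands it is applied to the wrong set $S$, so the conclusion ``$x$ is a non-isolated local minimum point'' is being asserted for points that need not even be stationary in the paper's sense. To repair the proposal you would either have to restrict to the directionally stationary set and re-establish its sub-analyticity (the paper's \Cref{lemma:f-is-sub-analytic} route), or actually prove the projection formula and then still intersect with the directional condition for (ii).
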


A direct corollary from \Cref{prop:geometric-char-of-sp} is that there are only \emph{finite} possible function values taken by \emph{all} stationary points, or local minima.

Apart from such geometric characterizations, \Cref{thm:analytic-abs-delta-SP-converge} also brings in a natural stability notion on stationary points: \emph{convergence stability}. It focuses on whether a reasonable iterative method started near a stationary point should be expected to stay near and eventually converge to this stationary point. Such a stability concept tolerates small numerical errors produced in the optimization process. For most optimization methods dealing with continuous problems, numerical errors during the optimization process may lead to gaps between the theoretical analysis and the observed behavior. When a stationary point is convergence-stable, it indicates that the theoretical results are insensitive to implementation issues. Such considerations are nontrivial qualitatively only when the objective function is both nonsmooth and nonconvex. Due to analyticity, \Cref{thm:analytic-abs-delta-SP-converge} yields the following theorem for function $f$, presenting a full geometric understanding of the convergence stability.

\begin{theorem}[Equivalent condition of convergence stability]\label{thm:analytic-max-is-convergence-stable}
Suppose $f(x)$ is finite compositions of analytic mappings and the maximum operator defined on a cube $\Omega$.\footnote{Again, $\Omega$ is only required to be a locally compact semi-analytic set.} Let $x^*$ be a stationary point of $f$. $x^*$ is convergence-stable if and only if $x^*$ is an isolated local minimum point.
\end{theorem}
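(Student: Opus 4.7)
The plan is to deduce both implications directly from the structural results already proved. For the ``if'' direction, I would argue that \Cref{thm:analytic-abs-delta-SP-converge} essentially \emph{is} convergence stability recast in different language. Assuming $x^*$ is an isolated local minimum, fix the radius $r_1$ supplied by the convergence lemma. For any target accuracy $\varepsilon>0$, pick the corresponding $\delta_0$. A reasonable iterative method whose per-step numerical error stays below $\delta<\delta_0$ produces accumulation points that are $O(\delta)$-stationary in the Clarke sense. Once the iterates reach $B(x^*,r_1)\cap\Omega$, such accumulation points are forced into $B(x^*,\varepsilon)$, so the only admissible limit is $x^*$ itself, which is exactly convergence stability.

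For the ``only if'' direction I would split into two cases according to how $x^*$ fails to be an isolated local minimum. In the first case $x^*$ is a stationary point but not a local minimum. Then every neighborhood of $x^*$ contains a point $y$ with $f(y)<f(x^*)$, and a descent step (even with arbitrarily small artificial perturbation) moves the iterate into the sublevel set $\{f<f(x^*)\}$, after which any monotone-decreasing method cannot return to $x^*$. This contradicts convergence stability directly.

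The harder case is when $x^*$ is a local minimum but not isolated as such. Here I would invoke \Cref{prop:geometric-char-of-sp} to read off the local geometry: by part~iv, there exists a neighborhood $U$ of $x^*$ in which every stratum $A_i$ meeting $U$ consists entirely of non-isolated local minima, and by part~iii each such stratum is contained in a level set of $f$ at the common value $f(x^*)$. Fix $\varepsilon>0$ smaller than the size of $U$; pick a stratum $A_i$ passing within $\delta\ll\varepsilon$ of $x^*$, and a point $y\in A_i$ with $\|y-x^*\|<\delta$. Now design a perturbation of the nominal iteration that injects one noise step of norm $\delta$ carrying the iterate from $x^*$ to $y$. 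Since $y$ is itself a local minimum of $f$ with the same value, the unperturbed continuation of the method remains near $y$ (this is where I would re-apply \Cref{thm:analytic-abs-delta-SP-converge} at $y$, or, if $y$ is itself non-isolated, simply use that the nominal method has no descent direction to leave $A_i$). The resulting limit lies outside $B(x^*,\varepsilon/2)$, exhibiting for arbitrarily small tolerance a valid perturbed iteration that fails to return to $x^*$.

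The main obstacle is the construction in the last paragraph: the notion of a ``reasonable iterative method'' has to be defined sharply enough that a single $\delta$-sized perturbation counts as admissible noise and that the post-perturbation iterates are \emph{forced} to stay near $y$ rather than drift back to $x^*$. Encoding this rigorously is where I expect most of the detailed work to go, since it amounts to formalizing the intuition that a noisy first-order method cannot distinguish $x^*$ from neighboring equally-valued local minima. By contrast, the ``if'' direction is a direct corollary of the convergence lemma, and the non-minimum case of the converse is the classical descent argument.
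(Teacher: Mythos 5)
Your ``if'' direction has a genuine gap. \Cref{thm:analytic-abs-delta-SP-converge} only controls $\delta$-stationary points that already lie \emph{inside} $B(x^*,r_1)\cap\Omega$, so before it can say anything about the accumulation points of an admissible run you must prove that the iterates never leave that ball. Your phrase ``once the iterates reach $B(x^*,r_1)\cap\Omega$'' assumes exactly what has to be shown: an ultimately decreasing, $\delta$-result-stationary algorithm started arbitrarily close to $x^*$ could, a priori, wander out of the ball and stall at a distant $\delta$-stationary point, and nothing in your sketch excludes this. The paper closes this hole with the path-boundedness parameter: it takes the compact annulus $\{x:\norm{x-x^*}\in[0.8r_0,0.9r_0]\}$, sets $\lambda_0$ to half the gap between $\min f$ on that annulus and $f(x^*)$ (positive because $x^*$ is the strict minimizer on the ball), chooses $r_1$ so that $f\leq f(x^*)+\lambda_0$ on $B(x^*,r_1)$, and shows that any segment between consecutive iterates crossing the annulus would violate $\lambda$-path-boundedness for $\lambda<\lambda_0$; only then does the convergence lemma apply to the limit points. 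Without an argument of this type, sufficiency does not follow from the convergence lemma alone, and indeed your sketch never invokes the ultimately-decreasing or path-bounded properties at all, even though the conclusion is false for algorithms lacking them.

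In the ``only if'' direction, your non-minimum case is fine in substance (the paper runs the same idea through the ultimately-decreasing property and continuity of $f$ at limit points). But your non-isolated case is both overbuilt and, as you yourself concede, unfinished. The definition of convergence stability (\Cref{def:converge-stab}) quantifies over \emph{arbitrary} initial points $x_0\in B(x^*,r_0)$ and \emph{arbitrary} algorithms with the three properties, so there is no need to ``inject a noise step'' at $x^*$ or to prove that a perturbed method is forced to remain near $y$ --- that forcing claim is precisely the step you cannot complete as stated. Instead: if $x^*$ is not an isolated local minimum, there is a point $x'\neq x^*$ arbitrarily close to $x^*$ with $f(x')\leq f(x^*)$; such an $x'$ is itself a minimizer over the ball, hence a stationary point, and the algorithm that starts at $x_0=x'$ and never moves is ultimately decreasing, $\delta$-result-stationary for every $\delta$, and $\lambda$-path-bounded, yet its unique limit point is $x'$. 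Taking $\varepsilon<\norm{x'-x^*}$ violates convergence stability immediately. This is the paper's argument; the appeal to \Cref{prop:geometric-char-of-sp} (equal values and non-isolated minima on strata) and the whole perturbation construction are unnecessary, and replacing them by the stalling-algorithm observation is what turns your plan into a complete proof.
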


\Cref{thm:analytic-max-is-convergence-stable} is intuitive, but by no means trivial. Similar to \Cref{thm:analytic-abs-delta-SP-converge}, analog counterexamples of \Cref{thm:analytic-max-is-convergence-stable} exist (\Cref{ex:nonex-for-converge-stable-C1} and \Cref{ex:nonex-for-converge-stable}) when the analytic mappings are replaced with differentiable or $C^\infty$ ones. Thus techniques related to analytic mappings are crucial for the proof.

\Cref{thm:analytic-abs-delta-SP-converge} and \Cref{thm:analytic-max-is-convergence-stable} together study the problem of ``where-to-converge'' in nonsmooth nonconvex optimization. These results provide a new geometric understandings to the dynamics of optimization processes. As an application in computational game theory, \Cref{thm:analytic-max-is-convergence-stable} provides an affirmative answer to the existence of stable instances for Tsaknakis-Spirakis descent methods for approximate Nash equilibria \cite{DBLP:conf/sagt/ChenDHLL21,DBLP:journals/corr/abs-2204-11525,DBLP:journals/im/TsaknakisS08,DBLP:conf/wine/TsaknakisSK08}.

This paper is organized as follows. In \Cref{sec:pre}, we introduce definitions and notations. Then in \Cref{sec:necessity-of-analytic} we show by two examples the necessity of analyticity in the convergence lemma. Next, we present the proof of the convergence lemma in \Cref{sec:proof-convergence-lemma}. Extending this proof, we prove \Cref{prop:geometric-char-of-sp} in \Cref{sec:locus-of-SP}. In \Cref{sec:convergence-stability}, we introduce the notion of convergence stability and then prove \Cref{thm:analytic-max-is-convergence-stable}. Finally, we conclude our results and present the next immediate open issues in \Cref{sec:discussion}.
\section{Preliminaries}\label{sec:pre}
We first introduce frequently used definitions and notations.

\subsection*{Sets and Euclidean spaces} The $n$-dimensional Euclidean space is denoted by $\R^n$. The closure of a set $A\subseteq\R^n$ is denoted by $\bar{A}$. The Euclidean inner product is $\inn{\cdot}{\cdot}$. The Euclidean norm in $\R^n$ is $\norm{\cdot}$. The open ball centered at $x$ with radius $r$,  $\{x'\in\R^n:\norm{x-x'}<r\}$, is denoted by $B(x,r)$. A closed ball can be expressed by $\bar{B}(x,r)$. Denote by $0_k$ a $k$-dimension column vector with all entries equal to $0$.

\subsection*{Mappings and functions} Let $F$ be a mapping. If $F$ maps to $\R$, we call $F$ a function. Below we suppose $F:\Omega\to\R^n$ is a mapping. The following notations and definitions apply directly to functions as well. 

The image of $S\subseteq \Omega$ is $F(S):=\{F(x):x\in S\}$. The pre-image of $S\subseteq\R$ is $F^{-1}(S):=\{x\in \Omega:F(x)\in S\}$. The restriction of $F$ on $S\subset\Omega$ is $F|_S$. The composition of two mappings $F$ and $G$ is $(F\circ G)(x):=F(G(x))$. If $F$ is univariate, denote its derivative at $x$ by $F'(x)$ or $\diff F(x)/\diff x$. Suppose $F$ is multi-variate. The directional derivative of $F$ in direction $s$ at $x$ is $\partial F(x)/\partial s$. If $F$ is a function, the gradient of $F$ at $x$ is $\grad F(x)$. The Jacobi matrix of $F$ at $x$ is $J_F(x)$ and the rank of this matrix is $\rank_x(F)$. $F\in C^n(\Omega)$ if it is continuously differentiable of the $n$th order on $\Omega$. $F\in C^\infty(\Omega)$ if $F\in C^n(\Omega)$ for all positive integer $n$. We say $F$ is analytic if for every $x\in \Omega$, $F$ is equal to its Taylor expansion at $x$ in some neighborhood of $x$.\footnote{Analytic mappings are $C^\infty$ mappings. The reverse is not true. One counterexample is Cauchy function $F(x)=\exp(x^{-2})$, which is $C^\infty$ but not analytic at $x=0$.}

We say a function $f(x)$ defined on $\Omega\subseteq \R^n$ is \emph{finite compositions of analytic mappings and the maximum operator}, if 
\[f(x)=f_{m}(f_{m-1}(\cdots f_{2}(f_{1}(x)))),\]
where $f_i:\Omega_{i-1}\to\Omega_{i}$, $\Omega_{i}\subseteq \R^{d_{i}}$, $d_{0}=n,d_{m}=1,\Omega_{0}=\Omega$, and $f_i(z_1,z_2,\cdots,z_{d_{i-1}})$ is either an analytic mapping on $\Omega_{i-1}$ , or in the form of 
\[f_i(z_1,z_2,\cdots,z_{d_{i-1}})=\left(\max_{j\in S_{i,1}} z_j,\max_{j\in S_{i,2}} z_j,\cdots,\max_{j\in S_{i,d_{i}}} z_j\right),\] 
where $S_{i,k}\subseteq \{1,\cdots,d_{i-1}\}$ for $k=1,\cdots,d_{i}$.

It sometimes simplifies notations to consider \emph{Dini directional derivatives} \cite{Demyanov2010}.
\begin{definition}
Suppose $x,x'\in \Omega$. The \emph{Dini directional derivative} in direction $x'-x$ at point $x$, denoted by $Df(x,x')$, is defined as the following limit (if the limit exists)
\[\lim_{\alpha\downarrow 0}\frac{f(x+\alpha (x'-x))-f(x)}{\alpha}.\]
\end{definition}
Note that standard directional derivatives is equivalently defined as the following limit:
\[\frac{\partial f(x)}{\partial s}:=\lim_{\alpha\downarrow 0}\frac{f(x+\alpha s)-f(x)}{\norm{\alpha s}}.\]
The relation between the two limits is direct.
\begin{lemma}\label{lemma:two-derivative}
Let $s=x'-x$. Then $\frac{\partial f(x)}{\partial s}$ exists if and only if $Df(x,x')$ exists. Moreover,
\[Df(x,x')=\frac{\partial f(x)}{\partial s}\norm{s}.\]
\end{lemma}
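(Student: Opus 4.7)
The plan is to show this directly from the two definitions by substituting $s = x'-x$ and using the positive homogeneity of the norm. First I would write out $x + \alpha(x'-x) = x + \alpha s$, so that
\[
Df(x,x') = \lim_{\alpha\downarrow 0}\frac{f(x+\alpha s)-f(x)}{\alpha},\qquad \frac{\partial f(x)}{\partial s} = \lim_{\alpha\downarrow 0}\frac{f(x+\alpha s)-f(x)}{\norm{\alpha s}}.
\]
Since $\alpha > 0$ in both one-sided limits, $\norm{\alpha s} = \alpha \norm{s}$, so the two difference quotients differ by the positive scalar factor $\norm{s}$ (assumed nonzero).

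Next I would invoke the basic fact about limits that if $c > 0$ is a constant, then $\lim_{\alpha \downarrow 0} g(\alpha)$ exists if and only if $\lim_{\alpha \downarrow 0} c\, g(\alpha)$ exists, and in that case the two limits differ by the factor $c$. Applied with $g(\alpha) = (f(x+\alpha s)-f(x))/(\alpha \norm{s})$ and $c = \norm{s}$, this yields simultaneously the existence equivalence and the identity $Df(x,x') = \norm{s}\cdot \partial f(x)/\partial s$.

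Finally I would dispose of the degenerate case $x = x'$, where $s = 0$: then $f(x+\alpha s) - f(x) = 0$ identically, so $Df(x,x') = 0$ trivially; the standard directional derivative in the zero direction is conventionally $0$, and the identity $0 = \norm{0}\cdot 0$ holds vacuously.

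There is essentially no obstacle here; the lemma is a bookkeeping statement converting between the two normalization conventions for directional derivatives, and the proof is a one-line limit computation. The only mild subtlety worth flagging explicitly in the write-up is that both limits are one-sided ($\alpha \downarrow 0$), which is exactly why the factor is $\norm{\alpha s} = \alpha\norm{s}$ rather than $|\alpha|\norm{s}$.
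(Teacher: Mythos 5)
Your proof is correct and matches the paper's treatment: the paper states the relation as ``direct'' without further argument, and your one-line computation using $\norm{\alpha s}=\alpha\norm{s}$ for $\alpha>0$ together with the constant-factor limit fact is exactly the intended justification. Your explicit handling of the degenerate case $s=0$ and the remark about the one-sided limit are sensible additions, not departures.
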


We note that function $f$ considered in this work has (Dini) directional derivative in every direction at every point since the maximum operator is directional differentiable.

We present two subdifferential concepts developed by Rockafellar \cite{Rockafellar+2015} and Clarke \cite{clarke1973necessary, clarke1990optimization}. The standard subdifferential at $x$ is defined to be
\[\partial f(x):=\{d\in\R^n:f(x')-f(x)\geq\inn{d}{x'-x}\text{ for all }x'\in\Omega\}.\]

The Clarke subdifferential $\bar{\partial} f(x)$ at $x$ is defined by the convex hull of the limits of gradients of $f$ on sequences converging to $x$, i.e., 
\[\bar{\partial} f(x):=\mathrm{conv}\left\{\lim_{i\to\infty} \grad f(y_i):\{y_i\}_{i\geq 0}\to x\text{, where }f\text{ is differentiable at every }y_i\right\}.\]

Again, function $f$ admits these subdifferentials at every point in $\Omega$.

\subsection*{Terminologies in Optimization}
Now we involve some concepts in optimization theory. From the optimization perspective, an optimization algorithm is designed to search for a minimum point (or local minimum point) of an objective function. In most settings, the optimization procedure is equivalent to finding a \emph{stationary point}. Due to the regularity of function $f$, such a concept can be defined as follows.
\begin{definition}
$x\in \Omega$ is a \emph{stationary point} if $\partial f(x)/\partial s\geq 0$ for every direction $s$. Or equivalently, $Df(x,x')\geq 0$ for every $x'\in\Omega$.
\end{definition}

Due to precision restrictions, it is almost impossible for most optimization algorithms to find an exact stationary point. Thus \emph{approximate stationary points} are proposed.
\begin{definition}
Let $\delta>0$. $x\in \Omega$ is a $\delta$-\emph{stationary point} if $\partial f(x)/\partial s\geq -\delta$ for every direction $s$.
\end{definition}

We remark that both definitions degenerate to the standard definitions (e.g., in \cite{boyd2004convex}) when $f$ is differentiable. Indeed, suppose $f$ is differentiable at $x\in\Omega$. Then $\grad f(x)=0\iff\partial f(x)/\partial s=0$ for every direction $s$, so the definition of stationary points coincides. On the other hand, $\norm{\grad f(x)}=\sup_{\norm{s}=1}\inn{\grad f(x)}{s}=\sup_s|\partial f(x)/\partial s|$, so the definition of $\delta$-stationary points coincides. Plus, stationary points defined by stand subdifferentials \cite{Rockafellar+2015} are equivalent to the above ones. Indeed, $0\in\partial f(x)\iff Df(x,x')\geq 0$. Similar facts hold for the approximate stationary points. In addition, our notions on (approximate) stationary points are a subset of Clarke (approximate) stationary points \cite{clarke1990optimization}.

Finally, we adopt the convention to present our results in the form of theorems, lemmas and corollaries, and results in the existing literature in the form of propositions.
\section{Necessity of Analyticity}\label{sec:necessity-of-analytic}
In this section, we show by two counterexamples that the analyticity assumption in \Cref{thm:analytic-abs-delta-SP-converge} is in some way necessary.

The first example shows that replacing analytic mappings by differentiable mappings is not enough for \Cref{thm:analytic-abs-delta-SP-converge}.
\begin{example}\label{ex:nonex-for-converge-stable-C1}
Let 
\begin{align*}
g(x)&:=\int_0^x 2u\sin(1/u) du,\\
h(x)&:=\begin{cases}x^2\sin(1/x),&x\neq 0,\\
0, &x=0,\end{cases}\\
f(x)&:=\min\{0,h(x)\}-g(x)+|x|.
\end{align*}

We show that $x^*=0$ is the unique stationary point as well as the unique local minimum point in $B\left(0,\frac1\pi\right)$. In the meantime, however, \Cref{thm:analytic-abs-delta-SP-converge} fails for $x^*$. 

Note that $g(x)$ and $h(x)$ are differentiable on $\R$:
\begin{align*}
    g'(x)&=\begin{cases}2x\sin(1/x),&x\neq 0,\\
    0,&x=0.
    \end{cases}\\
    h'(x)&=\begin{cases}2x\sin(1/x)-\cos(1/x),&x\neq 0,\\
    0,&x=0.
    \end{cases}
\end{align*} 

Therefore $f(x)$ can be expressed as finite compositions of differentiable mappings and the maximum function. However, we note that $f(x)$ has infinitely many non-differentiable points in any neighborhood of $x=0$.

Let $f'_{-}(x)=\lim_{x'\uparrow x}\frac{f(x')-f(x)}{x'-x}$ and $f'_{+}(x)=\lim_{x'\downarrow x}\frac{f(x')-f(x)}{x'-x}$ denote the left and right derivative of $f$ at $x$, respectively. By definition, $x$ is a stationary point of $f$ if and only if $f'_{-}(x)\leq0$ and $f'_{+}(x)\geq 0$.

First, we show that
\[\forall x\in(0,1), f'_{-}(x)>0.\] 
Observe that for $x>0$, $h(x)=0$ if and only if $x=\frac1{k\pi}$ for some integer $k>0$.

For $x\in \left(\frac1{(2k+1)\pi},\frac1{(2k-1)\pi}\right]$, $k=1,2,\cdots$, we discuss the left and right derivative of $f$ at $x$.

If $x\in\left(\frac1{2k\pi},\frac1{(2k-1)\pi}\right)$, $f(x)=h(x)-g(x)+x$, so $f^{\prime}_{-}(x)=f^{\prime}_{+}(x)=f'(x)=h'(x)-g'(x)+1=1-\cos(1/x)>0$.

If $x\in\left(\frac1{(2k+1)\pi},\frac1{2k\pi}\right)$, $f(x)=0-g(x)+x$, so $f^{\prime}_{-}(x)=f^{\prime}_{+}(x)=f'(x)=-g'(x)+1=1-2x\sin(1/x)>1-2x>0$.

If $x=\frac1{(2k-1)\pi}$, $f^{\prime}_{-}(x)=1-\cos(1/x)=2,f^{\prime}_{+}(x)=1-2x\sin(1/x)=1$.

For $x=\frac1{2k\pi}$, $f^{\prime}_{-}(x)=1-2x\sin(1/x)=1,f^{\prime}_{+}(x)=1-\cos(1/x)=0$.

Therefore, for any $x\in(0,\frac1\pi)$, $f^{\prime}_{-}(x)>0$, so there is no stationary point in $(0,\frac1\pi)$.
Similarly we can prove that for any $x\in\left(-\frac1\pi,0\right)$, $f^{\prime}_{+}(x)<0$, so $x$ cannot be a stationary point. In addition, one can check that $f^{\prime}_{+}(0)=1$ and $f^{\prime}_{-}(0)=-1$, so $x^*=0$ is a stationary point. Moreover, $x^*=0$ is also an isolated local minimum point.

However, we show that \Cref{thm:analytic-abs-delta-SP-converge} does not hold for $x^*=0$. Observe that for $k=1,2,\cdots$, and $t\in(0,\pi)$, \[f'\left(\frac1{2k\pi-t}\right)=1-\cos\left(2k\pi-t\right)=1-\cos\left(t\right)\leq\frac{1}{2}t^2.\] 
For any $r_0>0$, take $k=\left\lceil\frac1{2\pi r_0}\right\rceil+2$, and $\epsilon=\frac1{2k\pi+1}$. For any $\delta>0$, let $t=\min\{\sqrt{2\delta},1\}$. Set $x_0=\frac1{2k\pi-t}\in B(x^*,r_0)$. Then $x_0$ is a $\delta$-stationary point. However, $x_0=\frac1{2k\pi-t}\notin B(x^*,\epsilon)$. This violates \Cref{thm:analytic-abs-delta-SP-converge}.
\end{example}

We further construct the following similar counterexample which is finite compositions of $C^\infty$ mappings and the maximum operator.
\begin{example}\label{ex:nonex-for-converge-stable}
Let 
\begin{align*}
g(x)&:=\begin{cases}\displaystyle\int_0^x \left(u^{-2}e^{-1/\left|u\right|}\sin\left(1/u\right)-u^{-2}e^{-1/\left|u\right|}\right) du,&x\neq 0,\\
0, &x=0,\end{cases}\\
h(x)&:=\begin{cases}e^{-1/\left|x\right|}\sin(1/x),&x\neq 0,\\
0, &x=0,\end{cases}\\
f(x)&:=\min\{0,h(|x|)\}-g(|x|).
\end{align*}
One can verify the following facts:
\begin{enumerate}
    \item For any $x>0$,
    \begin{align*}
    g'(x)&=x^{-2}e^{-1/x}\sin\left(1/x\right)-x^{-2}e^{-1/x},\\
    h'(x)&=x^{-2}e^{-1/x}\sin\left(1/x\right)-x^{-2}e^{-1/x}\cos\left(1/x\right).
    \end{align*}
    \item For both $g(x)$ and $h(x)$, the differential of every order at $x=0$ is $0$.
    \item $g(x)$ and $h(x)$ belong to $C^\infty(\R)$.
    \item $f(x)$ can be expressed as finite compositions of $C^\infty$ mappings and the maximum function.
    \item $f'_{-}(x)>0$ for $x\in(0,1)$, so $f(x)$ is strictly increasing on $(0,1)$. These hold symmetrically on $(-1,0)$. So $x^*=0$ is the only stationary point in $B(0,1)$, and is the unique local minimum point.
    \item However, \Cref{thm:analytic-abs-delta-SP-converge} does not hold for $x^*=0$.
\end{enumerate}
The proof is quite similar to the previous example, so we put the detailed proof in \Cref{app:nonexample-convergence-stable}.
\end{example}

\section{Proof of the Convergence Lemma}\label{sec:proof-convergence-lemma}
In this section, we prove the convergence lemma, i.e., \Cref{thm:analytic-abs-delta-SP-converge}. The proof of \Cref{thm:analytic-abs-delta-SP-converge} can be divided into three steps. First, we rewrite $f$ into finite compositions of analytic mappings and the absolute-value operator. Then, we partition the domain $\Omega$ into so-called \emph{semi-analytic sets} and show tight connections between our problem and properties of semi-analytic sets. Finally, we utilize these connections to prove the lemma by contradiction. 

We will prove \Cref{thm:analytic-abs-delta-SP-converge} for compact semi-analytic domain $\Omega$ after introducing semi-analytic sets.

\subsection{From the Maximum Operator to the Absolute-Value Operator}
To make the proof more concise, we use the absolute-value operator $|x|$ to replace the maximum operator. A standard method to replace maximum operator with absolute-value operator is
\[\max\{a,b\}=\frac{|a-b|+a+b}{2}.\]
Thus finite compositions of analytic mappings and the maximum operator are equivalent to finite compositions of analytic mappings and the absolute-value operator. To make notations simple, we will consider the \emph{canonical form} below.

Consider $f:\Omega\to \R$, where $\Omega\subseteq\R^n$ is compact. $f$ is defined by the following evaluation process:
\begin{align*}
    z_1:&=f_1(x),\\
    z_2:&=f_2(x,|z_1|),\\
    &\cdots\\
    z_i:&=f_i(x,|z_1|,|z_2|,\cdots,|z_{i-1}|),\\
    &\cdots\\
    z_m:&=f_m(x,|z_1|,|z_2|,\cdots,|z_m|),\\
    f(x):&=z_m,
\end{align*}
where $f_i:\R^{n+i-1}\to\R$. For convenience, we use $z_1(x),z_2(x),\cdots,z_m(x)$ to denote the value of these intermediate variables as functions of $x$. We further require that for every $i=1,\cdots,m+1$, $f_i$ is analytic on the compact set
\[\Omega_{i}:=\{(x,|z_1(x)|,|z_2(x)|,\cdots,|z_{i-1}(x)|):x\in\Omega\}\subseteq\R^{n+i-1}.\] 
It is clear that every function in the form of finite compositions of analytic mappings and the absolute-value operator can be expressed in such canonical form.

\subsection{Semi-Analytic Geometry}
Before proposing the approach to partitioning $\Omega$, we introduce \emph{semi-analytic sets} and their geometric properties. We will use them for later proofs. We follow the standard terminologies from \cite{Semianalytic-and-subanalytic-sets} and \cite{lojasiewicz1995semi}.

A set $L\subseteq\R^n$ is called \emph{semi-analytic} if for every point $\alpha\in\R^n$, there is a neighborhood $U_\alpha$ of $\alpha$ such that $L\cap U_\alpha$ is a finite union of sets of the form $\{x\in\R^n:f_i(x)=0,g_j(x)> 0, i=1,\cdots,i_0,j=1,\cdots,j_0\}$, where $f_i$ and $g_j$ are analytic functions on $U_\alpha$. An equivalent statement of semi-analyticity is that $L$ is locally\footnote{A property $P$ is said to \emph{locally} hold for set $L\subseteq\R^n$ if for every point $\alpha\in\R^n$, there is a neighborhood $U_\alpha$ of $\alpha$ such that $P$ holds for $L\cap U_\alpha$. } a finite union of sets of that form. 

We now instead suppose the domain $\Omega$ is a compact semi-analytic set and prove the theorem for functions on this kind of domains. We have the following properties of semi-analytic sets.

\begin{proposition}[\cite{lojasiewicz1965ensembles}]
\label{prop:semi-analytic-set-properties}
Let $L\subseteq\R^n$ be a semi-analytic set. The following statements hold: 
\begin{itemize}
    \item The closure $\bar{L}$ is a semi-analytic set.
    \item The union and the intersection of a finite collection of semi-analytic sets is semi-analytic.
    \item $L$ is a locally finite union of connected semi-analytic sets ($L$'s connected components).
    \item If $L$ is connected, then any two points, $a_1$ and $a_2$, in $L$ can be joined by a semi-analytic curve. This means that there exists some embedding $\varphi: [0, 1]\to L$ whose image is a semi-analytic set in $\R^n$, with $\varphi(0) = a_0$ and $\varphi(1) = a_2$. A semi-analytic curve is analytic everywhere with the exception of a finite number of points.
\end{itemize}
\end{proposition}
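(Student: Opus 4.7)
The plan is to prove the four properties in order of increasing difficulty. The union and intersection properties follow directly from the local characterization. Working in a neighborhood $U_\alpha$ of each point, if $L_1\cap U_\alpha$ and $L_2\cap U_\alpha$ are written as finite unions of basic sets of the form $\{f_i=0,\ g_j>0\}$ with analytic defining functions, then $(L_1\cup L_2)\cap U_\alpha$ is visibly again such a finite union, while $(L_1\cap L_2)\cap U_\alpha$ becomes a finite union indexed by pairs of basic pieces, each piece obtained by merging the equations and strict inequalities of the two sides. Induction then extends this to any finite collection.

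For the closure property, I would work locally and reduce to the closure of a single basic set $S=\{f_1=\cdots=f_{i_0}=0,\ g_1>0,\cdots,g_{j_0}>0\}$. The naive candidate $\{f_i=0,\ g_j\geq 0\}$ contains $\bar{S}$ but is generally strictly larger. To express $\bar{S}$ in canonical form, one would rewrite each $g_j\geq 0$ as the disjunction $\{g_j>0\}\cup\{g_j=0\}$, expand into a finite union of basic semi-analytic pieces, and then identify precisely those pieces actually approached by points of $S$. The essential technical input here is the {\L}ojasiewicz inequality relating $|f|$ to the distance to the zero set of $f$, which controls the limiting behavior of analytic functions near their zero loci and ensures that after the unreachable pieces are discarded what remains is still a finite union of basic semi-analytic sets.

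The main obstacle is the last two properties, which are inaccessible by elementary set-theoretic manipulations. The standard approach is to invoke the {\L}ojasiewicz triangulation (or stratification) theorem: any semi-analytic set $L$ admits a locally finite partition into connected analytic submanifolds (strata) such that the boundary of each stratum is a union of lower-dimensional strata. Given such a decomposition, the connected components of $L$ emerge as unions of strata and inherit local finiteness from the stratification. For the path-connectedness claim, given two points $a_1,a_2$ lying in a single connected component, I would construct a path by alternately moving within a stratum and transitioning at a shared boundary point into an adjacent stratum; working inside a compact neighborhood ensures that only finitely many strata are visited, so the concatenation yields a semi-analytic curve that is analytic except at the finitely many transition points, with a small perturbation arranged to make it an embedding. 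The genuinely hard step is establishing the triangulation theorem itself, which rests on {\L}ojasiewicz's deep analysis via the semi-analytic preparation theorem; I would treat this as a black box since the proposition is attributed to that foundational work.
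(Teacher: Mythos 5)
First, a point of comparison: the paper does not prove \Cref{prop:semi-analytic-set-properties} at all. By the convention stated at the end of \Cref{sec:pre}, propositions are results imported from the literature, and this one is quoted from {\L}ojasiewicz (see also Bierstone--Milman). So there is no in-paper argument to match; your proposal is a reconstruction of the classical proofs, and it should be judged on that basis.

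On the merits, the union/intersection part is correct and is the standard elementary manipulation of local basic pieces. The rest has genuine gaps. For the closure, the step you leave open is precisely the whole difficulty: deciding which basic pieces of $\{f_i=0,\ g_j\geq 0\}$ are ``actually approached'' by $S$. Pointing to the {\L}ojasiewicz inequality is the right instinct, but your sketch does not show how it produces a semi-analytic description of $\bar{S}$; in the standard treatment the inequality is applied to compare continuous semi-analytic functions (e.g.\ distance-type functions) and the reduction is a substantive argument, not a pruning of a finite list of pieces. For the last two bullets, treating the stratification/triangulation theorem as a black box is legitimate given the attribution, but the glue you supply is incomplete: a stratum is an analytic submanifold that is semi-analytic, yet an arbitrary path inside it need not have semi-analytic image, so ``moving within a stratum'' does not by itself yield a semi-analytic curve. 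One needs either induction on dimension, the curve selection lemma to reach boundary strata by analytic arcs, or the full triangulation theorem --- in which latter case the arc is read off from an edge path of the complex and the stratum-hopping construction is unnecessary. Similarly, ``a small perturbation to make it an embedding'' needs a justification that the perturbed curve remains semi-analytic and stays inside $L$; as written this is asserted rather than argued.
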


By Borel-Heine finite cover lemma, statement ``locally finite union'' can be improved to ``finite union on compact $\Omega$''. Another observation is that a semi-analytic set restricted on semi-analytic domain $\Omega$ (i.e., intersecting with $\Omega$) is still semi-analytic. We will use these two facts to improve every local property of semi-analytic sets to a global property on $\Omega$.

We have the following property.
\begin{proposition}[In the proof of Lemma 3.4 in \cite{Semianalytic-and-subanalytic-sets}]
\label{prop:semi-analytic-set-rank}
Let $L\subseteq\R^n$ be a semi-analytic set. Let $\pi_{n,m}:\R^n\to\R^m$ be the projection mapping 
\[(x_1,x_2,\cdots,x_n)\mapsto (x_1,x_2,\cdots,x_m),\]
where $m<n$. If $L$ is a connected analytic manifold, then for every $k\leq m$, $\{x\in L:\rank_{x}(\pi_{n,m}\vert_L)\leq k\}$ is a semi-analytic set.
\end{proposition}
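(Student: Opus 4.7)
The plan is to reduce the rank condition to the vanishing of finitely many analytic functions on local coordinate charts, and then to invoke the local nature of semi-analyticity. First I would fix an arbitrary point $x_0\in L$. Since $L$ is an analytic submanifold of $\R^n$ of some dimension $d$, after permuting coordinates of $\R^n$ there is a neighborhood $V$ of $x_0$ and an open set $U\subseteq\R^d$ such that $L\cap V$ is the graph $\{(u,\psi(u)):u\in U\}$ of a real analytic map $\psi:U\to\R^{n-d}$. This local graph structure follows from the implicit function theorem applied to a local analytic defining system for $L$ near $x_0$, choosing a complementary coordinate subspace transverse to $T_{x_0}L$.

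Next I would compute $\rank_x(\pi_{n,m}\vert_L)$ in these coordinates. Writing $\varphi(u)=(u,\psi(u))$, the chain rule shows that the Jacobian of $\pi_{n,m}\circ\varphi:U\to\R^m$ at $u$ is an $m\times d$ matrix $J(u)$ whose entries are analytic functions of $u$, and $\rank J(u)=\rank_{\varphi(u)}(\pi_{n,m}\vert_L)$. The condition $\rank J(u)\le k$ is therefore equivalent to the simultaneous vanishing of all $(k+1)\times(k+1)$ minors $M_\alpha$ of $J$. Each $M_\alpha$ is an analytic function of $u\in U$, hence extends trivially to an analytic function $\widetilde M_\alpha$ on $U\times\R^{n-d}$ that depends only on the first $d$ coordinates. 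Consequently
\[\{x\in L\cap V:\rank_x(\pi_{n,m}\vert_L)\le k\}=(L\cap V)\cap\bigcap_{\alpha}\{x\in V:\widetilde M_\alpha(x)=0\}.\]

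Since $L$ is semi-analytic by hypothesis, so is $L\cap V$, and intersecting with the zero locus of finitely many analytic functions preserves semi-analyticity by \Cref{prop:semi-analytic-set-properties}. Thus the desired set is semi-analytic on a neighborhood of $x_0$. For any point $x_0\notin L$, one can choose a neighborhood disjoint from $L$, on which the set in question is empty and trivially semi-analytic. Since $x_0\in\R^n$ was arbitrary, the local characterization of semi-analyticity yields that $\{x\in L:\rank_x(\pi_{n,m}\vert_L)\le k\}$ is semi-analytic in $\R^n$.

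I expect the main technical point to be the passage from the local description in the chart $U$ to a semi-analytic description as a subset of the ambient $\R^n$: without the graph representation, the parametrization $\varphi$ is only locally analytic and its inverse need not extend to $\R^n$, so one cannot directly pull back the analytic equations. Exploiting the graph form sidesteps this, because $\varphi^{-1}$ is then literally a coordinate projection. Beyond this choice, the rest of the argument is a bookkeeping exercise with analytic functions and their zero sets.
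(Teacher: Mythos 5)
Your local computation at points of $L$ is fine: in a graph chart the condition $\rank_x(\pi_{n,m}\vert_L)\le k$ is the simultaneous vanishing of the $(k+1)\times(k+1)$ minors of an analytic matrix in the chart variables, and these extend to analytic functions on a full ambient neighborhood of the chosen point of $L$, so the rank locus $S$ admits the required description near every $x_0\in L$. The gap is at points of $\bar{L}\setminus L$. Semi-analyticity, as defined in the paper, must be verified at \emph{every} point of $\R^n$, and your dismissal of points outside $L$ (``one can choose a neighborhood disjoint from $L$'') is false unless $L$ is closed, which is neither assumed nor true where the proposition is used: it is applied to the graph $A'$ of $f^s$ over a stratum $A$ coming from \Cref{prop:semi-analytic-Whitney}, and such strata typically have nonempty boundary consisting of other strata. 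At a point $a\in\bar{L}\setminus L$ the graph representation of $L$ and your extended minors $\widetilde M_\alpha$ live only on charts interior to $L$ and need not extend analytically to any neighborhood of $a$ in $\R^n$, so your argument produces no semi-analytic description of $S\cap U_a$ there. This boundary case is exactly where the content of the statement lies: being locally cut out by analytic equations near each of its \emph{own} points is much weaker than semi-analyticity (the graph of $\sin(1/x)$, $x>0$, is a connected analytic manifold with this property, yet it fails to be semi-analytic precisely at its boundary points), and a relatively closed subset of a semi-analytic set --- which is what $S$ is, since the rank is lower semicontinuous --- need not be semi-analytic in general.

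Note also that the paper itself gives no proof of this proposition; it is quoted from the proof of Lemma~3.4 in \cite{Semianalytic-and-subanalytic-sets}, where the rank locus is controlled using analytic defining data valid on full ambient neighborhoods of arbitrary points, including points of the closure of the stratum (this is what the normal partition/stratification machinery of {\L}ojasiewicz and Bierstone--Milman supplies), rather than from interior manifold charts. To repair your proposal you would need such ambient local defining systems for $L$ near points of $\bar{L}\setminus L$ --- for instance, equations whose differentials describe $T_xL$ for all nearby $x\in L$ on a neighborhood of the boundary point --- and that is precisely the nontrivial step your argument skips; the interior-chart bookkeeping is the easy part.
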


To present further properties, we need \emph{stratification theorems} motivated by Whitney \cite{WhitneyTangents}. A stratification tries to divide a manifold into a finite set of submanifolds based on some given sets. In semi-analytic situations, we state such stratifications as follows. Given an open cube $Q=\{(x_1,\cdots,x_n)\in\R^n:|x_i|<\delta\}$, a \emph{normal stratification} in $\R^n$ is a finite decomposition $\mathcal{I}=\{\Gamma_1,\cdots,\Gamma_g\}$ of $Q$ into disjoint union of semi-analytic sets $\Gamma_j$'s, called \emph{strata}. A stratum $\Gamma_j\in\mathcal{I}$ in this
stratification is a subset of $Q$ being both an analytic submanifold and a semi-analytic set. 

Given an analytic $n$-manifold $M$ and a point $x\in M$, take a local chart $(U,\xi)$ of $M$ at $x$, where $U$ is the homeomorphic neighborhood of $x$ and $\xi$ is the analytic coordinate transformation from $U$ to $\R^n$ such that $\xi(x)=0_n$. A \emph{normal stratification} of $M$ at $x$ is defined to be an image of a stratification in $\R^n$ under the inverse map $\xi^{-1}$. {\L}ojasiewicz proved the following theorem.
\begin{proposition}[Theorem on normal stratification, \cite{lojasiewicz1965ensembles}]
For arbitrarily given semi-analytic sets $E_1,\cdots, E_k$ in $M$ and a point $x\in M$, there exists a normal stratification of $M$ in an arbitrarily small neighbourhood $U$ of $x$ such that each of the sets $E_i\cap U$ is a union of some strata of this stratification.
\end{proposition}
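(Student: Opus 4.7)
The plan is to prove the statement by induction on $n = \dim M$. Since stratification is a local notion, after composing with the chart map $\xi$ we may reduce to the case $M = \R^n$ with $x = 0$, working inside a small open cube $Q$ centered at the origin. Each semi-analytic set $E_i$ is, by definition, locally a finite Boolean combination of zero sets and positivity sets of analytic functions; shrinking $Q$ if necessary and collecting all such defining functions across $E_1, \dots, E_k$ yields a finite family $\mathcal{F} = \{h_1, \dots, h_N\}$ of analytic functions on $Q$. It then suffices to build a normal stratification $\mathcal{I}$ of $Q$ on which each $h_j$ has constant sign (one of $>0$, $=0$, $<0$), since every set in the Boolean algebra generated by $\{h_j > 0\}$ and $\{h_j = 0\}$, hence every $E_i \cap Q$, is automatically a union of strata.

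The base case $n = 1$ is essentially trivial: by the identity theorem, any nonzero analytic $h_j$ has isolated zeros, so shrinking $Q$ leaves only finitely many joint zeros of the $h_j$'s. Their complement in $Q$ is a finite disjoint union of open intervals on each of which every $h_j$ has constant sign, and the isolated zeros themselves are zero-dimensional analytic submanifolds.

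For the inductive step, I would reduce to dimension $n - 1$ by projection. After a generic linear change of coordinates (an analytic isomorphism of $Q$), the Weierstrass preparation theorem applies to each nontrivial $h_j$ at $0$: one writes $h_j(x', x_n) = u_j(x', x_n)\, p_j(x', x_n)$, where $u_j$ is a nonvanishing analytic unit and $p_j$ is a Weierstrass polynomial in $x_n$ with coefficients analytic in $x' = (x_1, \dots, x_{n-1})$. Since the sign of $h_j$ agrees with the sign of $p_j$, the geometry of $\{h_j \lessgtr 0\}$ is controlled by the real roots of the $p_j$'s in the fibre over each $x'$. I would then form an auxiliary family $\mathcal{H}$ on $\R^{n-1}$ consisting of the leading coefficients, the discriminants of each $p_j$, and all resultants $\mathrm{Res}_{x_n}(p_j, p_l)$ and $\mathrm{Res}_{x_n}(p_j, \partial_{x_n} p_l)$. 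Applying the inductive hypothesis to $\mathcal{H}$ over a cube $Q' \subseteq \R^{n-1}$ gives a normal stratification $\mathcal{I}'$ of $Q'$ on which all functions in $\mathcal{H}$ have constant sign. Over each base stratum $\Gamma' \in \mathcal{I}'$, the number of distinct real roots of each $p_j$ and their relative order are therefore constant, each such root extends to an analytic function of $x' \in \Gamma'$ by the implicit function theorem (applied to what are now simple roots), and no root of $p_j$ ever meets a root of $p_l$. Declaring each root graph, together with each open region into which these graphs partition the cylinder $\Gamma' \times (-\delta, \delta)$, to be a new stratum of $Q$ produces the desired stratification.

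The main obstacle will be the generic-coordinates step. One must arrange, by a single analytic change of variables, that Weierstrass preparation applies simultaneously to every $h_j$ and to every auxiliary function in $\mathcal{H}$, with none of them degenerating in the $x_n$ direction; this requires a density/genericity argument for admissible directions, plus the bookkeeping needed to confirm that the lifted strata are indeed analytic submanifolds and semi-analytic sets, rather than only semi-analytic. This genericity combined with the propagation of smoothness through Weierstrass preparation is the technically delicate part of \L{}ojasiewicz's original proof.
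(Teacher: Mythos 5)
The paper does not prove this proposition at all: it is imported verbatim from {\L}ojasiewicz's work (stated as a proposition precisely because it is a literature result), so there is no ``paper proof'' to compare against. Your sketch follows the classical route that {\L}ojasiewicz himself takes (and that Bierstone--Milman reproduce): reduce to a cube, collect the finitely many defining functions, choose coordinates so that Weierstrass preparation applies, project to $\R^{n-1}$, stratify the base by an auxiliary family, and lift via root graphs and the open bands between them. So the strategy is the right one; the issues are in two specific steps.

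First, your auxiliary family $\mathcal{H}$ is too small. The top discriminant of $p_j$ and the resultants $\mathrm{Res}_{x_n}(p_j,p_l)$ may vanish \emph{identically} (e.g.\ $p_j=(x_n^2-x_1)^2$, or $p_j,p_l$ sharing a factor), in which case ``constant sign on each base stratum'' carries no information and the fibrewise root configuration can genuinely change over a single stratum. One must either pass to reduced (square-free) distinguished polynomials and pairwise coprime factors --- using that the ring of convergent power series is a UFD --- or, as {\L}ojasiewicz does, include the whole sequence of generalized discriminants (subresultant coefficients), whose sign pattern detects the exact number of distinct roots. Second, your appeal to the implicit function theorem ``applied to what are now simple roots'' is not justified: constancy of the number of distinct roots over a stratum does not make the roots simple, and multiple roots can persist along the whole stratum. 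What is actually needed is the separate lemma that if the number of distinct (complex) roots of a distinguished polynomial is locally constant on a connected set, then the roots are analytic functions of the base point; this is proved via the generalized discriminants (or by passing to the appropriate derivative of the square-free part), not by the IFT on the original $p_j$. Finally, you should also record why the lifted pieces (root graphs over a semi-analytic stratum and the bands between consecutive graphs) are themselves semi-analytic in $\R^n$ --- this follows from describing them by sign conditions on the $p_j$'s and on auxiliary polynomials rather than by the root functions directly, but it is a step, not an automatic consequence. With those repairs your outline becomes the standard proof; the coordinate-genericity point you flag is, by contrast, routine (finitely many germs, each nonzero, admit a common direction of $x_n$-regularity).
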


In fact, it can be required that the strata satisfy more conditions. Let $M$ be a manifold in $\R^n$. Take $x\in M$. Denote tangent space\footnote{A tangent space of set $S$ at point $x\in S$ is the set of tangent vectors of $S$ at $x$. If $S$ is a manifold, any tangent space of $S$ is also a vector space.} at $x$ by $ T_x M$. Let $N$ be another manifold in $\R^n$. Define the distance between two tangent spaces contained in $\R^n$ as follows.
\[\delta( T_x M,  T_y N):=\max_{v\in T_x M,\norm{v}=1}\min_{u\in T_y N,\norm{u}=1}\norm{u-v}.\]
Note that the minimum and the maximum can be attained since the set of unit vectors in a tangent space is compact and the distance function is continuous.

A \emph{semi-analytic stratification} of an analytic manifold $M$ is a locally finite decomposition $\mathcal{I}$ of this manifold into semi-analytic strata, fulfilling the additional condition: For every stratum $\Gamma\in\mathcal{I}$, its boundary $\partial \Gamma = \bar{\Gamma}\setminus\Gamma$ is a finite union of strata of family $\mathcal{I}$ (having dimensions less than $\dim\Gamma$).

Let us consider triples $(\Lambda,\Gamma,a)$, where $\Lambda, \Gamma$ are semi-analytic strata of an analytic manifold $M$, such that $\Lambda\subseteq\partial\Gamma$ and $a\in\Lambda$. A well-studied condition proposed by Whitney \cite{WhitneyTangents} is 
\[\delta( T_a\Lambda, T_x\Gamma)\to 0\quad\text{when}\quad\Gamma\ni x\to a.\]
This condition is called \emph{Whitney's condition (a)}. We have the following stratification theorem.
\begin{proposition}[Theorem on semi-analytic stratification, \cite{lojasiewicz1995semi,Stratifications-Whitney}]
\label{thm:semi-analytic-stratification}
For every locally finite family of semi-analytic subsets of an analytic manifold $M$, there exists a semi-analytic stratification consistent with this family and such that for every pair $\Lambda, \Gamma$ of strata of this stratification, such that $\Lambda\subseteq\partial\Gamma$, the Whitney's condition (a) is fulfilled at every point of the stratum $\Lambda$. 
\end{proposition}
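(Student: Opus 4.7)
The plan is to prove the theorem by a downward induction on dimension, combined with successive refinements that enforce, in turn, the frontier condition and Whitney's condition (a). First, I would replace the given locally finite family $\{E_i\}$ by the coarsest locally finite semi-analytic partition of $M$ subordinate to $\{E_i\}$; \Cref{prop:semi-analytic-set-properties} guarantees that finite intersections and set differences remain semi-analytic, so this produces pairwise disjoint semi-analytic ``atoms'' whose union is $M$ and such that each $E_i$ is a union of atoms.

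Next, working in decreasing order of dimension, for each atom $A$ of dimension $d$ I would decompose $A = A_{\mathrm{reg}} \sqcup A_{\mathrm{sing}}$, where $A_{\mathrm{reg}}$ is the open subset of $A$ on which $A$ is locally an analytic $d$-submanifold of $M$. A standard property of semi-analytic sets (provable from \Cref{prop:semi-analytic-set-rank} by stratifying the Jacobian ranks of local defining equations) gives that $A_{\mathrm{reg}}$ is semi-analytic and $A_{\mathrm{sing}}$ is semi-analytic of dimension strictly less than $d$. The connected components of $A_{\mathrm{reg}}$ become top-dimensional strata, and $A_{\mathrm{sing}}$, together with $\bar{\Gamma}\setminus\Gamma$ for every stratum $\Gamma$ already produced (itself semi-analytic of dimension $<d$ by \Cref{prop:semi-analytic-set-properties}), is pushed back into the family on which the induction continues. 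Termination at dimension $0$ yields a semi-analytic stratification consistent with $\{E_i\}$ and already satisfying the frontier condition.

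For Whitney's condition (a), the key is to define for each pair $(\Lambda, \Gamma)$ with $\Lambda \subseteq \partial\Gamma$ the bad locus
\[B(\Lambda, \Gamma) := \left\{a \in \Lambda : \limsup_{\Gamma \ni x \to a}\, \delta(T_a\Lambda, T_x\Gamma) > 0\right\},\]
and to argue that $B(\Lambda, \Gamma)$ is a semi-analytic subset of $\Lambda$ of dimension strictly less than $\dim\Lambda$. Granting this, I would further refine by subtracting $B(\Lambda, \Gamma)$ from $\Lambda$ and pushing it into the lower-dimensional portion of the family. By local finiteness of the strata, only finitely many pairs arise near any compact set, so the refinement terminates, and the previously stated theorem on normal stratification can be applied locally to glue the refined strata into a global semi-analytic stratification satisfying Whitney's (a) at every point of every $\Lambda$.

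The hardest step will be the semi-analyticity of $B(\Lambda,\Gamma)$ together with the strict dimension drop. The natural approach is to recast the failure of (a) as a property of the closure of the Gauss map $x \mapsto T_x\Gamma$ valued in the Grassmannian $\mathrm{Gr}(\dim\Gamma, n)$, whose graph can be cut out by analytic equations and semi-analytic inequalities on the Jacobians of local defining functions of $\Gamma$. This exhibits the graph's closure as sub-analytic, and a projection argument in the sub-analytic category identifies $B(\Lambda,\Gamma)$ as a semi-analytic subset of $\Lambda$. The dimension drop then follows from the curve-selection property in \Cref{prop:semi-analytic-set-properties}: along any semi-analytic arc approaching a generic $a \in \Lambda$, Puiseux-type parameterization makes $T_x\Gamma$ depend analytically on the arc parameter, forcing $\delta(T_a\Lambda, T_x\Gamma) \to 0$ on a dense open subset of $\Lambda$; since $B(\Lambda,\Gamma)$ is therefore nowhere dense in $\Lambda$ and semi-analytic, its dimension is strictly less than $\dim\Lambda$, which is exactly what the refinement requires.
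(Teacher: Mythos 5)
First, a point of orientation: the paper does not prove this statement at all --- by the paper's own convention it is a \emph{proposition}, i.e.\ a result quoted from the literature (\L{}ojasiewicz), so there is no in-paper proof to compare against; what you have attempted is a reproof of \L{}ojasiewicz's stratification theorem itself. Your outline (refine to semi-analytic atoms, induct downward on dimension, enforce the frontier condition, then remove a ``bad locus'' for condition (a)) is indeed the classical architecture, but the crux --- that the fault set $B(\Lambda,\Gamma)$ is semi-analytic and of dimension strictly less than $\dim\Lambda$ --- is not established by your argument. Concretely: identifying $B(\Lambda,\Gamma)$ via the closure of the Gauss-map graph in $M\times\mathrm{Gr}(\dim\Gamma,n)$ and then projecting only yields a \emph{sub-analytic} set; semi-analyticity is not preserved under projection (that is precisely why sub-analytic sets are introduced in \Cref{sec:locus-of-SP}), so your refinement would exit the semi-analytic category unless you supply the separate, nontrivial argument that keeps the fault set semi-analytic --- or you would end up proving only a sub-analytic version of the statement.

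The dimension-drop step is a genuine non sequitur as written. Along a single analytic (Puiseux-parameterized) arc $\gamma(t)\to a$ in $\Gamma$, analyticity gives convergence of $T_{\gamma(t)}\Gamma$ to \emph{some} limit plane, and that limit plane contains the limiting tangent direction of the arc --- one direction --- but nothing forces it to contain all of $T_a\Lambda$, which is what $\delta(T_a\Lambda,T_x\Gamma)\to 0$ means under the paper's asymmetric distance; moreover, failure of condition (a) quantifies over \emph{all} sequences in $\Gamma$, so convergence along selected arcs cannot control the $\limsup$. The classical proof needs curve selection \emph{with parameters} (\L{}ojasiewicz's wing lemma): assuming the fault set has full dimension in $\Lambda$, one selects a wing $W\subseteq\Gamma$ with $\dim W=\dim\Lambda+1$ lying inside the region where $\delta(T_{\cdot}\Lambda,T_x\Gamma)\geq\varepsilon$ and whose boundary contains an open piece of $\Lambda$; a parametrized Puiseux argument gives condition (a) generically for the pair $(\Lambda,W)$, and since $T_xW\subseteq T_x\Gamma$ the asymmetric distance is monotone, contradicting the choice of $W$. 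Note also that the curve property quoted in \Cref{prop:semi-analytic-set-properties} merely joins two points of a connected semi-analytic set and is not the curve selection lemma your argument invokes. Finally, the closing step ``apply the theorem on normal stratification locally and glue'' is itself a gap: normal stratifications are purely local, and making overlapping local stratifications compatible is exactly the content a global stratification theorem must supply, so it cannot be waved in as the last step.
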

A semi-analytic set is called a \emph{smooth semi-analytic set} if it is also an analytic manifold. We have a direct corollary from \Cref{thm:semi-analytic-stratification} and \Cref{prop:semi-analytic-set-properties}.
\begin{lemma}\label{prop:semi-analytic-Whitney}
Let $L\subseteq\R^n$ be a semi-analytic set. Suppose the closure $\bar{L}$ is compact. Then $L$ is the disjoint union of a finite collection $\{A_k\}$ of connected smooth semi-analytic subsets. Moreover, $A_k\cap\partial A_l\neq\varnothing$ implies that $A_k\subseteq\partial A_l$, and that $A_k$ and $A_l$ satisfy the Whitney's (a) condition: For any $a\in A_k$,
\[\lim_{A_l \ni x\to a}\delta( T_aA_k,  T_xA_l)=0.\]
\end{lemma}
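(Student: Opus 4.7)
The lemma is stated as a direct corollary, and my plan mirrors that: I would build the decomposition $\{A_k\}$ by first applying the stratification theorem, then passing to connected components, and finally checking the frontier and Whitney (a) conditions at the component level.

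The first step is to invoke Proposition 2.4 with $\R^n$ as the ambient analytic manifold and the singleton $\{L\}$ as the (trivially locally finite) family. This yields a locally finite semi-analytic stratification $\{\Gamma_j\}$ of $\R^n$ consistent with $\{L\}$, so $L$ is a disjoint union of strata; each $\Gamma_j$ is simultaneously an analytic submanifold and a semi-analytic set; the frontier condition $\partial \Gamma_j = \bar{\Gamma_j} \setminus \Gamma_j$ is a finite union of lower-dimensional strata; and Whitney's (a) condition holds along every frontier pair. Compactness of $\bar{L}$ together with local finiteness forces the indexing set touching $\bar{L}$ to be finite, so $L$ is already a finite disjoint union of strata $\Gamma_1, \ldots, \Gamma_N$.

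The second step is to refine each $\Gamma_j$ into its connected components. By Proposition 2.1, $\Gamma_j$ is a locally finite union of connected semi-analytic subsets, and compactness of $\bar{\Gamma_j} \subseteq \bar{L}$ makes that union finite. Because a manifold is locally connected, each component is clopen in $\Gamma_j$ and therefore inherits the analytic submanifold structure of the same dimension. Collecting all components across the $\Gamma_j$ produces the desired finite family $\{A_k\}$ of connected smooth semi-analytic sets with $L = \bigsqcup_k A_k$. Whitney (a) descends from the strata to their components immediately, since a component and its ambient stratum share tangent spaces at every point. For the frontier condition, the case of two components of the same stratum is vacuous: if $A_k, A_l \subseteq \Gamma$ are distinct components, then $A_l$ is clopen in $\Gamma$, so $\bar{A_l} \cap \Gamma = A_l$ and $A_k \cap \bar{A_l} \subseteq A_k \cap A_l = \varnothing$.

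The genuinely delicate case is $A_k \subseteq \Gamma_a$ and $A_l \subseteq \Gamma_b$ with $\Gamma_a \neq \Gamma_b$. The stratum-level frontier condition still gives $\Gamma_a \subseteq \partial \Gamma_b$, hence $A_k \subseteq \partial \Gamma_b$; but \emph{a priori} different points of $A_k$ could lie in the closures of different components of $\Gamma_b$, so one only obtains $A_k \subseteq \bigcup_l \bar{A_l}$ rather than $A_k \subseteq \bar{A_l}$ for a single $l$. This is the main obstacle. My plan to resolve it is to re-apply the stratification theorem with the enlarged locally finite family $\{L\} \cup \{A_k\}$; since each $A_k$ is already a connected analytic submanifold, it cannot be subdivided in its top-dimensional interior by the refinement and thus remains a single stratum of the new stratification, at which point the refined frontier condition delivers $A_k \subseteq \partial A_l$ directly and the accompanying Whitney (a) transfers verbatim.
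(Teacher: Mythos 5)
Your overall route is the same as the paper's intended one: the paper states this lemma without a written proof, as a direct consequence of the semi-analytic stratification theorem together with the listed closure/component properties of semi-analytic sets, and your first two steps (stratify $\R^n$ consistently with $\{L\}$, use local finiteness plus compactness of $\bar{L}$ to extract finitely many strata, pass to connected components, and note that Whitney's condition (a) restricts from a stratum to each of its components since tangent spaces agree) are exactly that derivation and are correct. The subtlety you isolate is also real under the paper's wording, because the paper's definition of a semi-analytic stratification does not explicitly require strata to be connected, so the frontier condition is only supplied stratum-to-stratum.

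However, your proposed repair has a genuine gap. Applying the stratification theorem to the enlarged family $\{L\}\cup\{A_k\}$ only guarantees a stratification \emph{consistent} with this family, i.e.\ each $A_k$ is a \emph{union} of new strata; nothing in the theorem prevents the new stratification from subdividing $A_k$ (refinements typically must cut strata precisely in order to satisfy the frontier condition against the other members of the family), so the claim that each $A_k$ ``remains a single stratum'' is unjustified and false in general. If you instead replace your $A_k$'s by the new strata, you are back where you started: the new strata again need not be connected, taking components re-creates the very frontier problem you were trying to fix, and the iteration does not obviously terminate. The standard ways to close the gap are: (i) invoke the form of {\L}ojasiewicz's stratification theorem in the cited sources, in which strata are connected by definition, so that $A_k\cap\partial A_l\neq\varnothing\Rightarrow A_k\subseteq\partial A_l$ is part of the cited statement and the lemma really is a direct corollary (this is evidently what the paper has in mind); or (ii) carry out the refinement yourself by downward induction on dimension, refining the partition so that the closure of every connected component is a union of members of the partition --- the classical construction of a distinguished stratification. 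With either of these in place of your last step, the rest of your argument (finiteness from compactness, the clopen-component observation making the same-stratum case vacuous, and the restriction argument for Whitney (a)) goes through.
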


\subsection{Domain Partitioning}
Now we are ready to give the partitioning of $\Omega$. Define
\[\mathrm{sign}(t):=\begin{cases}
1,&t>0,\\
0,&t=0,\\
-1,&t<0.
\end{cases}
\]
Then we define $\sigma_i(x):=\mathrm{sign}(z_i(x))$ for $x\in\Omega,i=1,\cdots,m-1$, and $\sigma(x):=(\sigma_1(x),\cdots,\sigma_{m-1}(x))$. 

For $s=(s_1,\cdots,s_{m-1})\in\{-1,0,1\}^{m-1}$, we define a function $f^s(x)$ by the following evaluation process.
\begin{align*}
    z_1^s&:=f_1(x),\\
    z_2^s&:=f_2(x,s_1z_1),\\
    &\cdots\\
    z_i^s&:=f_i(x,s_1z_1,s_2z_2,\cdots,s_{i-1}z_{i-1}),\\
    &\cdots\\
    z_m^s&:=f_m(x,s_1z_1,s_2z_2,\cdots,s_{m-1}z_{m-1}),\\
    f^s(x)&:=z_m^s.
\end{align*}
Then we define 
\[Q_{s}:=\{x\in\Omega:\text{For every } i=1,\cdots,m-1,\sigma_i(x)=s_i\}.\]
The closure of $Q_{s}$ is 
\[\bar{Q}_{s}=\{x\in\Omega:\text{For every } i=1,\cdots,m-1,\sigma_i(x)=s_i\text{ or }\sigma_i(x)=0\}.\]
For $x\in\bar{Q}_{s},f(x)=f^s(x)$, and $f^s(x)$ is analytic on $\bar{Q}_{s}$. Moreover, $Q_s$'s and $\bar{Q}_s$'s are compact semi-analytic sets. Now we have partitioned the domain $\Omega$ into the disjoint union of $Q_s$'s.

\subsection{Limit Behaviour of \texorpdfstring{$\delta$}{delta}-Stationary Points as \texorpdfstring{$\delta\to 0$}{delta->0}}
Next, we present the limit behaviour of a sequence of $\delta$-stationary points as $\delta\to 0$ in view of their locations. We call $x_0\in\Omega$ a \emph{limit stationary point} if there exists a point sequence $\{x_k\}_{k\geq 1}$ in $\Omega$ and a corresponding nonnegative sequence $\{d_k\}_{k\geq 1}$, such that 
\begin{itemize}
    \item $\lim_{k\to\infty}x_k=x_0$, 
    \item $\limsup_{k\to\infty}d_k=0$, and
    \item for $k=1,2,\cdots$, $x_k$ is an $d_k$-stationary point.
\end{itemize}

Suppose $x_0\in\Omega$ is such a limit stationary point. Since $\{x_k\}$ is an infinite sequence, there is at least one $s\in\{-1,0,1\}^{m-1}$ such that for infinitely many $k$, $\sigma(x_k)=s$. Without loss of generality we assume that $\sigma(x_k)=s$ for all $k\geq 1$. This implies that $x_k\in Q_{s}$ and $x_0\in\bar{Q}_s$.

We first state the relation between the location of the limit stationary point and the location of the $\delta$-stationary point sequence. By \Cref{prop:semi-analytic-Whitney}, $\bar{Q}_{s}$ can be decomposed into a disjoint union of a finite collection $\mathcal{A}$ of connected smooth semi-analytic subsets. There is subset $A\in\mathcal{A}$ so that $x_0\in A$ and $X\in\mathcal A$ such that $X$ contains infinitely many items of $\{x_k\}$. Again without loss of generality we assume that $x_k\in X$ for all $k$. Since $\lim_{k\to\infty}x_k=x_0$, by \Cref{prop:semi-analytic-Whitney} we know that either $A=X$, or $A\subseteq \partial X$. In either case the Whitney's (a) condition holds for $A$ and $X$, so $\lim_{k\to\infty}\delta( T_{x_0}A; T_{x_k}X)=0$.

Let the graph of $f^s$ on $A$ be
\[A':=\{(x,y)\in\R^{n+1}:x\in A,y-f^{s}(x)=0\}.\]
Since $A$ is a connected smooth semi-analytic set in $\R^n$ and $f^{s}(x)$ is analytic on $\bar{Q}_s\supseteq A$, $A'$ is a connected smooth semi-analytic set in $\R^{n+1}$.

Our next step is to construct a set $A^*\subseteq A$ consisting of points in $A$ whose gradient on manifold $A$ is $0$. Let $\pi:\R^n\times\R\to\R$ be the projection $\pi(x,y)=y$. 
\[A^*:=\{x\in A:\rank_{(x,f^s(x))}(\pi\vert_{A'})=0\}.\]
By \Cref{prop:semi-analytic-set-rank}, $A^*$ is semi-analytic since $A'$ is an analytic manifold.

A key observation is that
\begin{lemma}\label{lemma:x0-in-A*}
$x_0\in A^*$.
\end{lemma}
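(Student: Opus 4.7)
The plan is to first translate $x_0 \in A^*$ into a concrete gradient condition, and then bridge from the $\delta$-stationarity of $x_k$ (which controls the Dini derivative of $f$ along arbitrary rays) to the vanishing of the intrinsic gradient of $f^s|_A$ at $x_0$. Since $A'$ is the graph of $f^s|_A$, parametrized by $x \mapsto (x, f^s(x))$, the tangent space is $T_{(x,f^s(x))}A' = \{(v, \inn{\grad f^s(x)}{v}) : v \in T_xA\}$, and the projection $\pi(x,y)=y$ has rank $0$ at $(x_0, f^s(x_0))$ exactly when $\inn{\grad f^s(x_0)}{v} = 0$ for every $v \in T_{x_0}A$. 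So the goal reduces to showing that the ambient gradient $\grad f^s(x_0)$ is orthogonal to $T_{x_0}A$.

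To prove this orthogonality, fix a unit vector $v \in T_{x_0}A$. The limit $\delta(T_{x_0}A, T_{x_k}X) \to 0$ extracted from Whitney's condition (a) in the setup supplies unit vectors $v_k \in T_{x_k}X$ with $v_k \to v$. Since $x_k \in X \subseteq \bar{Q}_s$ and $f \equiv f^s$ throughout $\bar{Q}_s$, any smooth curve $\gamma_k : (-\epsilon, \epsilon) \to X$ with $\gamma_k(0) = x_k$ and $\gamma_k'(0) = v_k$ satisfies $f(\gamma_k(t)) = f^s(\gamma_k(t))$ on its domain; differentiating at $t = 0$ yields $\inn{\grad f^s(x_k)}{v_k}$. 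Because $f$ is locally Lipschitz (as a finite composition of analytic mappings and the absolute-value operator) and $\gamma_k(t) = x_k + t v_k + O(t^2)$, the Dini derivative of $f$ along the straight ray $x_k + t v_k$ coincides with this curve derivative:
\[
Df(x_k, x_k + v_k) = \inn{\grad f^s(x_k)}{v_k}.
\]
The analogous identity for $-v_k \in T_{x_k}X$ (using the time-reversed curve) yields $Df(x_k, x_k - v_k) = -\inn{\grad f^s(x_k)}{v_k}$. Applying $d_k$-stationarity of $x_k$ in both directions gives $|\inn{\grad f^s(x_k)}{v_k}| \leq d_k$. Since $f^s$ is analytic on $\bar{Q}_s$, $\grad f^s$ is continuous at $x_0$; letting $k \to \infty$ with $v_k \to v$ and $d_k \to 0$ yields $\inn{\grad f^s(x_0)}{v} = 0$, which by the reduction above puts $x_0 \in A^*$.

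The step I expect to be the main obstacle is the identification $Df(x_k, x_k + v_k) = \inn{\grad f^s(x_k)}{v_k}$. A priori the straight ray $\{x_k + \alpha v_k : \alpha > 0\}$ may leave $X$ immediately, and along it $f$ can switch among several analytic branches $f^{s'}$ as the sign vector $\sigma$ flips, so there is no formal reason the ray derivative should match a derivative along a curve inside $X$. The rescue uses two features specific to our setup: $f$ is continuous and locally Lipschitz, so the $O(t^2)$ discrepancy between the straight ray and a tangent curve is absorbed in the limit, and $f$ genuinely agrees with $f^s$ on the whole stratum $X$ rather than merely at the single point $x_k$. Once this identification is in place, Whitney's condition (a) and continuity of $\grad f^s$ close the argument routinely.
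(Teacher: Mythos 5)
Your proposal is correct and follows essentially the same route as the paper's proof: it transfers tangent directions of $A$ at $x_0$ to nearby tangent directions of $X$ at $x_k$ via Whitney's condition (a), uses the $d_k$-stationarity of $x_k$ to bound $\grad f^s(x_k)\,v_k$, and passes to the limit by continuity of $\grad f^s$ on $\bar{Q}_s$; the paper merely phrases this as a contradiction with a single descent direction $u\in T_{x_0}A$, while you argue directly for every direction with a two-sided bound. Your explicit justification of $Df(x_k,x_k+v_k)=\inn{\grad f^s(x_k)}{v_k}$ via a tangent curve in $X$ and the Lipschitz property of $f$ is a welcome added detail at a point the paper treats implicitly.
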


The lemma says that limit stationary point $x_0$ has zero gradient on manifold $A$.
\begin{proof}
We prove the lemma by contradiction. For mapping $g$, let $J_g(x)$ be the Jacobi matrix of $g$ at $x$. Suppose that $x_0\notin A^*$, which means $\rank_{(x_0,f^s(x_0))}(\pi\vert_{A'})>0$. Furthermore, $\pi\vert_{A'}$ maps $A'$ to $\R$, a one-dimensional space, so $\rank_{(x_0,f^s(x_0))}(\pi\vert_{A'})\leq 1$. Thus
\[\rank_{(x_0,f^s(x_0))}(\pi\vert_{A'})=1.\]

Let $K=\dim A'$. Define mapping $\psi:A\to A'$ by $\psi(x)=(x,f^s(x))$ for $x\in A$. Note that $\pi\circ\psi=f^s$. Take a local chart $(U,\xi)$ of $A'$ at $(x_0,f^s(x_0))$, where $U$ is the homeomorphic neighborhood of $(x_0,f^s(x_0))$ and $\xi$ is the analytic coordinate transformation from $U$ to $\R^K$. Define $\hat{\xi}:=\xi\circ\psi$. Let $z:=\xi(x_0,f^s(x_0))=\hat{\xi}(x_0)\in\R^K$. Since $\xi$ is the analytic coordinate transformation, we have 
\[\rank_z(\pi\circ\xi^{-1})=\rank_{(x_0,f^s(x_0))}(\pi\vert_{A'})=1.\]
By the chain rule of differentials, we have
\[J_{\pi\circ\xi^{-1}}(z)=J_{\pi\circ\psi\circ\hat{\xi}^{-1}}(z)=\grad {f^s}(x_0)J_{\hat{\xi}^{-1}}(z)\neq 0_K^T.\] 
Therefore there is some $v\in\R^K$ such that $\grad {f^s}(x_0)J_{\hat{\xi}^{-1}}(z)v<0$. Let
\[u:=\frac{J_{\hat{\xi}^{-1}}(z)v}{\norm{J_{\hat{\xi}^{-1}}(z)v}}.\]
We have $u\in  T_{x_0}A$, $\norm{u}=1$, and $\grad {f^s}(x_0)u<0$. Note that $\norm{\grad {f^s}(x)}$ is bounded on compact set $\bar{Q}_s$, so we have 
\begin{align*}
    &-\limsup_{k\to\infty}d_k\\
\leq&\liminf_{k\to\infty}\inf_{u'\in T_{x_k}\Omega}\frac{\partial f^s(x_k)}{\partial u'}\\
\leq&\liminf_{k\to\infty}\min_{u'\in  T_{x_k}X}\frac{\partial f^s(x_k)}{\partial u'}\\
=&\liminf_{k\to\infty}\min_{u'\in  T_{x_k}X,\norm{u'}=1}\grad {f^s}(x_k)u'\\
\leq&\liminf_{k\to\infty}\min_{u'\in  T_{x_k}X,\norm{u'}=1}\left(\grad {f^s}(x_k)u+\norm{\grad {f^s}(x_k)}\cdot\norm{u'-u}\right)\\
\leq &\liminf_{k\to\infty}\left(\grad {f^s}(x_k)u+\norm{\grad {f^s}(x_k)}\cdot \max_{u''\in  T_{x_0}A,\norm{u''}=1}\min_{u'\in  T_{x_k}X,\norm{u'}=1}\norm{u'-u''}\right)\\
=&\liminf_{k\to\infty}\left(\grad {f^s}(x_k)u+\norm{\grad {f^s}(x_k)}\cdot\delta\left( T_{x_0}A,  T_{x_k}X\right)\right)\\
=&\liminf_{k\to\infty}\grad {f^s}(x_k)u\\
=& \grad {f^s}(x_0)u<0.
\end{align*}

This contradicts the fact $\limsup_{k\to\infty}d_k=0$. Therefore $x_0\in A^*$.
\end{proof}

By \Cref{prop:semi-analytic-set-properties}, there are finitely many connected components of $A^*$. Assume that $\hat{A}\subseteq A^*$ is the connected component containing $x_0$. We have the following property.
\begin{lemma}\label{lemma:constant-on-hatA}
$f(x)$ is a constant on $\bar{\hat{A}}$, the closure of $\hat{A}$.
\end{lemma}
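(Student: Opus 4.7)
The plan is to reduce the claim to showing that $f^s$ is constant on the connected set $\hat{A}$, and then extend to $\bar{\hat{A}}$ by the continuity of $f^s$. Since $\hat{A}\subseteq A^*\subseteq A\subseteq\bar{Q}_s$ and $f=f^s$ is analytic on $\bar{Q}_s$, continuity will be automatic once the equality is established on $\hat{A}$ itself. So the entire task is to show that $f^s$ takes a single value on $\hat{A}$.

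The first step is to check that $\hat{A}$ has enough semi-analytic structure to be pathwise connected by an admissible curve. By construction $A^*$ is a semi-analytic set (the preceding paragraphs invoked \Cref{prop:semi-analytic-set-rank} for this), and $\hat{A}$ was defined as the connected component of $A^*$ containing $x_0$. By the third bullet of \Cref{prop:semi-analytic-set-properties}, connected components of a semi-analytic set are themselves semi-analytic, so $\hat{A}$ is a connected semi-analytic subset of $\R^n$. The fourth bullet of the same proposition then lets me, for any two points $a,b\in\hat{A}$, pick an embedding $\varphi:[0,1]\to\hat{A}$ with $\varphi(0)=a$, $\varphi(1)=b$ whose image is semi-analytic and which is analytic on $[0,1]$ except at finitely many exceptional parameters $0\leq t_1<\dots<t_N\leq 1$.

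The second step is to differentiate $g(t):=f^s(\varphi(t))$ on each open subinterval between consecutive $t_j$'s. On such a subinterval $\varphi$ is analytic and its image lies in the analytic submanifold $A$, so $\varphi'(t)\in T_{\varphi(t)}A$. Because $\varphi(t)\in\hat{A}\subseteq A^*$, the defining condition $\rank_{(\varphi(t),f^s(\varphi(t)))}(\pi|_{A'})=0$ is exactly the statement that the differential of $f^s|_A$ vanishes at $\varphi(t)$, i.e.\ $\grad f^s(\varphi(t))\cdot u=0$ for every $u\in T_{\varphi(t)}A$. Applying this to $u=\varphi'(t)$ and using the chain rule for the analytic composition gives $g'(t)=\grad f^s(\varphi(t))\cdot\varphi'(t)=0$ throughout each smooth subinterval. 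Hence $g$ is constant on each piece, and since $g$ is continuous on the full interval $[0,1]$ (as both $f^s$ and $\varphi$ are continuous), the finitely many exceptional points cannot introduce a jump. Thus $g(0)=g(1)$, i.e.\ $f^s(a)=f^s(b)$.

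Because $a,b\in\hat{A}$ were arbitrary, $f^s$ is constant on $\hat{A}$. Since $f^s$ is analytic, hence continuous, on $\bar{Q}_s$, and $\bar{\hat{A}}\subseteq\bar{Q}_s$ (because $\hat{A}\subseteq A\subseteq\bar{Q}_s$ and $\bar{Q}_s$ is closed), the constant value propagates to the closure: $f^s$ is constant on $\bar{\hat{A}}$. Finally, $f=f^s$ on $\bar{Q}_s$, so $f$ itself is constant on $\bar{\hat{A}}$, as required. The only mildly delicate point of the argument is the claim that $\varphi'(t)$ lies in $T_{\varphi(t)}A$, but this is immediate from the fact that $\varphi$ is a smooth curve whose image lies in the analytic submanifold $A$; I do not expect any substantive obstacle here beyond keeping the semi-analytic bookkeeping straight.
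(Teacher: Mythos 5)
Your proof is correct and follows essentially the same route as the paper: join two points by a semi-analytic curve (the connectivity property of semi-analytic sets), observe that the rank-zero defining condition of $A^*$ forces $\grad f^s(x)u=0$ for all $u\in T_xA$ (the same claim the paper verifies via a local-chart computation, which you leave implicit), and conclude constancy along the curve. The only cosmetic difference is that the paper runs the curve inside $\bar{\hat{A}}$ and integrates, while you prove constancy on $\hat{A}$ and then pass to $\bar{\hat{A}}$ by continuity of $f^s$ on $\bar{Q}_s$ --- an equally valid, arguably cleaner, finish.
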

\begin{proof}
Let $x''$ be any point in $\hat{A}$. We need to show that for every $x\in\bar{\hat{A}}$, $f(x'')=f(x)$. Note that $\bar{\hat{A}}$ is a semi-analytic set. By \Cref{prop:semi-analytic-set-properties}, there is a semi-analytic curve $\varphi(t):[0,1]\to\bar{\hat{A}}$ such that $\varphi(0)=x,\varphi(1)=x''$, and $\varphi(t)\in\bar{\hat{A}}$ for $t\in(0,1]$. $\varphi(t)$ is analytic everywhere except finitely many points, so without loss of generality we assume that $\varphi(t)$ is analytic on $(0,1)$. It suffices to prove that $f^s(\varphi(0))=f^s(\varphi(1))$. 

For any mapping $g$, let $J_g(x)$ be the Jacobi matrix of $g$ at $x$. We have
\[f^s(\varphi(0))-f^s(\varphi(1))=\int_0^1\grad {f^s}(\varphi(t))J_{\varphi}(t)dt.\] 
Since $\varphi$ is a curve, the Jacobi matrix is a tangent vector, i.e., $J_{\varphi}(t)\in T_{\varphi(t)}A$. We claim that for every $x\in A^*$ and every $u\in T_x A$, $\grad {f^s}(x)u=0$. Then it follows that $f^s(\varphi(0))-f^s(\varphi(1))=0$, therefore $f^s(x)=f^s(x'')$, i.e., $f(x)=f(x'')$.

Now we prove the claim. Let $K=\dim A'$. Define mapping $\psi:A\to A'$ by $\psi(x)=(x,f^s(x))$ for $x\in A$. Note that $\pi\circ\psi=f^s$. Take a local chart $(U,\xi)$ of $A'$ at $(x,f^s(x))$, where $U$ is the homeomorphic neighborhood of $(x,f^s(x))$ and $\xi$ is the analytic coordinate transformation from $U$ to $\R^K$. Define $\hat{\xi}:=\xi\circ\psi$. Let $z:=\xi(x,f^s(x))=\hat{\xi}(x)\in\R^K$. Recall the definition of $A^*$,
\[\rank_z\left(\pi\vert_{A'}\right)=0.\]
Since $\xi$ is the analytic coordinate transformation, we have 
\[\rank_z(\pi\circ\xi^{-1})=\rank_z(\pi\vert_{A'})=0.\] 
By the chain rule of differentials, we have
\[J_{\pi\circ\xi^{-1}}(z)=J_{\pi\circ\psi\circ\hat{\xi}^{-1}}(z)=\grad {f^s}(x)J_{\hat{\xi}^{-1}}(z)= 0_K^T.\] 
For every $u\in T_x A$, it can be written in $J_{\hat\xi^{-1}}(z)v$ for some $v\in\R^K$, Thus
\[\grad f^s(x)u=\grad {f^s}(x)J_{\hat{\xi}^{-1}}(z)v=0.\]
\end{proof}

\subsection{Isolation Property of Local Minimum Point \texorpdfstring{$x^*$}{x*}}

We have shown that, for every $s\in\{-1,0,1\}^{m-1}$, there are finitely many $A$'s in the corresponding decomposition $\mathcal A$ of $\bar{Q}_s$. We have also constructed semi-analytic subset $A^*$ from each $A$. Let $P$ denote the union of all these $A^*$'s. $P$ is still a semi-analytic set, and has finitely many connected components by \Cref{prop:semi-analytic-set-properties}. Since $\Omega=\bigcup_{s\in\{-1,0,1\}^{m-1}}Q_s$, any limit stationary point $x_0\in\Omega$ is in $P$.

Now suppose stationary point $x^*\in\Omega$ is the unique minimum point in $B(x^*,r)\cap\Omega$ for some radius $r$. Observe that any stationary point is also a limit stationary point, so $x^*\in P$. We have the following lemma.
\begin{lemma}\label{lemma:x*-is-isolated}
$x^*$ is an isolated point of $P$. In other words, there is some $r_1>0$ such that $B(x^*,2r_1)\cap P=\{x^*\}$.
\end{lemma}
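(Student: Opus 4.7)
The plan is to argue by contradiction, leveraging \Cref{lemma:constant-on-hatA} to produce a \emph{second} minimum inside $B(x^*,r)\cap\Omega$. Suppose $x^*$ is not isolated in $P$, so that there exists a sequence $\{x_n\}\subseteq P\setminus\{x^*\}$ with $x_n\to x^*$. Recall that $P=\bigcup_{s,A} A^*$ is a \emph{finite} union: there are finitely many sign patterns $s\in\{-1,0,1\}^{m-1}$, for each $s$ the decomposition of $\bar{Q}_s$ provided by \Cref{prop:semi-analytic-Whitney} contains only finitely many pieces $A$ (as $\bar{Q}_s$ is compact), and each $A^*$ is itself semi-analytic with compact closure, hence a finite disjoint union of connected components by \Cref{prop:semi-analytic-set-properties}. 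By pigeonhole, after passing to a subsequence I may assume that every $x_n$ lies in a single fixed connected component $\hat{A}$ of some $A^*$. In particular $x^*\in\overline{\hat{A}}$.

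Next I would invoke \Cref{lemma:constant-on-hatA}. Inspecting its proof, the only properties it actually uses are that $\hat{A}$ is a connected component of some $A^*$ (so any two of its points can be joined by a semi-analytic arc along which $\grad f^s$ integrates to zero against tangent vectors); the particular limit stationary point $x_0$ played no essential role. Thus $f$ is constant on $\overline{\hat{A}}$, and because $x^*\in\overline{\hat{A}}$, this constant value equals $f(x^*)$. Consequently $f(x_n)=f(x^*)$ for every $n$. For $n$ sufficiently large we have $x_n\in B(x^*,r)\cap\Omega$ and $x_n\neq x^*$, contradicting the hypothesis that $x^*$ is the unique minimum of $f$ in $B(x^*,r)\cap\Omega$. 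Hence $x^*$ is an isolated point of $P$, so there exists some $\rho>0$ with $B(x^*,\rho)\cap P=\{x^*\}$, and setting $r_1:=\rho/2$ yields the claim.

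I expect the only subtle point to be the reuse of \Cref{lemma:constant-on-hatA} for an arbitrary component $\hat{A}$ rather than one built from a specific limit stationary point, which is a matter of re-reading its proof and noting that $x_0$ only enters as a basepoint for the curve $\varphi$. The pigeonhole step is clean because all three finiteness statements (over $s$, over $A$, and over components of $A^*$) are furnished by the compactness of $\Omega$ together with \Cref{prop:semi-analytic-Whitney} and \Cref{prop:semi-analytic-set-properties}, and the final passage from isolation in $P$ to the existence of $r_1$ is immediate.
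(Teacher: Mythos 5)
Your proposal is correct and follows essentially the same route as the paper: pigeonhole the approximating sequence into one of the finitely many connected components $\hat{A}$ of the $A^*$'s, apply \Cref{lemma:constant-on-hatA} (whose proof indeed never uses the basepoint $x_0$, only that $\hat{A}$ is a component of some $A^*$) to get $f(x_n)=f(x^*)$, and contradict the uniqueness of the minimum in $B(x^*,r)\cap\Omega$. The paper's argument is the same, just stated more briefly; your explicit justification of the three finiteness claims and of the reuse of \Cref{lemma:constant-on-hatA} is a faithful elaboration rather than a different method.
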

\begin{proof}
Suppose for the contrary that $x^*$ is not an isolated point of $P$. Note that there are only finitely many $A^*$'s and finitely many $\hat{A}$'s. In addition, $P$ is the union of $\hat{A}$'s. Thus there exists some infinite set $\hat{A}$ such that $x^*$ is a cluster point of $\hat{A}$. However, by \Cref{lemma:constant-on-hatA} we know that for any $x\in \hat{A}$, $f(x)=f(x^*)$. Thus there exists an infinite sequence $x_k\to x^*$ in $\hat{A}$ such that $f(x_k)=f(x^*)$ for any $k$. That contradicts the fact that $x^*$ is the unique local minimum point in $B(x^*,r)\cap\Omega$. Therefore $x^*$ must be an isolated point of $P$. 
\end{proof}

At last, we prove \Cref{thm:analytic-abs-delta-SP-converge} by contradiction. Take $r_1$ in \Cref{lemma:x*-is-isolated}. Suppose that there is some $\varepsilon_0>0$ such that for every $\delta>0$, there exists $x'\in (B(x^*,r_1)\setminus B(x^*,\varepsilon_0))\cap\Omega$, such that $x'$ is a $\delta$-stationary point. Equivalently, there is an infinite point sequence $\{x_k\}$ in $(B(x^*,r_1)\setminus B(x^*,\varepsilon_0))\cap\Omega$, and a corresponding sequence $\{d_k\}$, such that $\limsup_{k\to\infty}d_k=0$, and for $k=1,2,\cdots$, $x_k$ is an $d_k$-stationary point. Take any convergent subsequence of $\{x_k\}$, and assume it converges to some point $x_0$. By \Cref{lemma:x0-in-A*}, $x_0\in P$ and $\norm{x_0-x^*}\in[\varepsilon_0,r_1]$, which contradicts the fact $B(x^*,2r_1)\cap P=\{x^*\}$.

Now we finally complete the proof of \Cref{thm:analytic-abs-delta-SP-converge}.

\begin{remark}
In the proof of \Cref{thm:analytic-abs-delta-SP-converge}, only the last part uses the fact that $x^*$ is an isolated local minimum point. In fact, if we give up a neat statement, \Cref{thm:analytic-abs-delta-SP-converge} can be generalized. Let $L$ be a connected set of local minimum points that is isolated from other stationary points. Then the set of $\delta$-stationary points in some neighborhood of $L$ is shrinking to $L$ as $\delta\to 0$.
\end{remark}
\section{Proof of Geometric Properties of Stationary Points}\label{sec:locus-of-SP}
In this section, we use \Cref{lemma:x0-in-A*} and \Cref{lemma:constant-on-hatA} to prove \Cref{prop:geometric-char-of-sp}. We have shown that all the limit-stationary points (also, stationary points) are contained in a semi-analytic set $P$, which has some good properties:
\begin{enumerate}
    \item $P$ is a semi-analytic set, and can be decomposed into the union of a finite collection of connected smooth semi-analytic subsets.
    \item $P$ has finitely many connected components, and $f$ is constant on each component.
\end{enumerate}

However, one may notice that, not every point in $P$ is a limit-stationary point. Inspired by \cite{DBLP:journals/siamjo/BolteDL07} and \cite{coste2000introduction}, we introduce sub-analytic sets and sub-analytic functions, and give a better characterization on the locus of all the limit-stationary points, and stationary points, respectively.

We say a set $A\subseteq\R^n$ to be \emph{sub-analytic} if each point $x\in \R^n$ admits a neighborhood $U$, such that 
\[A\cap U=\pi_{n+m,n}(B),\]
where $B$ is a bounded semi-analytic subset of $\R^{n+m}$ for some $m$, and 
\[\pi_{n+m,n}(x_1,\cdots,x_{n+m})=(x_1,\cdots,x_n)\] 
is the canonical projection from $\R^{n+m}$ to $\R^n$.

\begin{proposition}[\cite{Semianalytic-and-subanalytic-sets}, Section 3]\label{prop:sub-analytic-set-properties}
We have the following properties of sub-analytic sets:
\begin{itemize}
    \item Every semi-analytic set is sub-analytic.
    \item Sub-analytic sets are closed under finite union and intersection.
    \item The closure, interior, and complement of a sub-analytic set is sub-analytic.
    \item The Cartesian product of two sub-analytic sets is sub-analytic.
    \item If $A\subseteq \R^n$ is a sub-analytic set, then $\pi_i(A)$ is sub-analytic, here \[\pi_i(x_1,\cdots,x_n)=(x_1,\cdots,x_{i-1},x_{i+1},\cdots,x_n).\]
\end{itemize}
\end{proposition}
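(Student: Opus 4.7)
The plan is to handle each clause separately, since the techniques differ sharply, and to identify where the real difficulty lies. I would first dispose of the easy ones by unwinding the definition. That every semi-analytic set $A$ is sub-analytic is immediate: for each $x\in\R^n$ pick a bounded neighborhood $U$ on which $A\cap U$ is a finite union of basic semi-analytic pieces, then take $m=0$ and $B:=A\cap U$, so that $\pi_{n,n}(B)=A\cap U$. The Cartesian product claim is similar: given local representations $A_j\cap U_j=\pi_{n_j+m_j,n_j}(B_j)$ with $B_j$ bounded semi-analytic, the product $B_1\times B_2$ is bounded semi-analytic by \Cref{prop:semi-analytic-set-properties}, and after a suitable reordering of coordinates its image under the appropriate projection equals $(A_1\times A_2)\cap(U_1\times U_2)$.

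Next I would handle union, intersection, and the one-coordinate projection $\pi_i$. For union, restrict both representations to the common neighborhood $U_1\cap U_2$, pad each $B_j$ by a product with a bounded interval so that they share an ambient dimension, and take the union, which is again bounded semi-analytic. For intersection, the natural trick is a fiber product: form
\[B:=\{(x,y_1,y_2)\in\R^{n+m_1+m_2}:(x,y_1)\in B_1,\ (x,y_2)\in B_2\},\]
which is bounded semi-analytic, and whose projection onto the first $n$ coordinates equals $\pi_{n+m_1,n}(B_1)\cap\pi_{n+m_2,n}(B_2)=(A_1\cap A_2)\cap(U_1\cap U_2)$. For $\pi_i$, locally at a point $\pi_i(x)$ with $x\in A$ and $A\cap U=\pi_{n+m,n}(B)$, the composition $\pi_i\circ\pi_{n+m,n}$ is itself a coordinate projection $\pi'$ from $\R^{n+m}$, so $\pi_i(A\cap U)=\pi'(B)$. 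To realize $\pi_i(A)$ locally at an arbitrary target point $y\in\R^{n-1}$, one covers the fiber $\pi_i^{-1}(y)\cap A$ inside a bounded cylinder by finitely many such neighborhoods and then invokes the union step just established.

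Closure is obtained by combining the semi-analytic closure property in \Cref{prop:semi-analytic-set-properties} with the fact that projection commutes with closure on the image of a bounded set, giving $\overline{A}\cap U=\pi_{n+m,n}(\overline{B})$ locally after shrinking $U$. The clause I expect to dominate the whole argument is that the complement of a sub-analytic set is sub-analytic: this is Gabrielov's theorem of the complement, and no set-theoretic manipulation alone suffices because complementation fails to commute with projection. The standard route proceeds via a uniformization or rectilinearization theorem for sub-analytic sets, which realizes any such set as the image of a proper real-analytic map from an analytic manifold whose pullback has simple normal-crossings form; stratifying the source manifold along this normal form lets one describe explicitly the target points missed by the image and thereby express the complement itself as a projection of a bounded semi-analytic set. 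Once the complement clause is in hand, the interior claim is free from $\mathrm{int}(A)=\R^n\setminus\overline{\R^n\setminus A}$, a composition of closure and complement already established.
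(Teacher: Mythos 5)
The paper itself does not prove \Cref{prop:sub-analytic-set-properties}: by the authors' stated convention it is imported verbatim from the literature (Bierstone--Milman, Section~3), so your proposal has to stand on its own. Most of it does. The clauses you call easy are handled correctly: semi-analytic implies sub-analytic (take $B=A\cap U$ with $U$ a bounded semi-analytic neighborhood), products, unions after padding ambient dimensions, intersection via the fiber-product set $B$, and closure via compactness of $\bar{B}$ together with $\pi(\bar{B})=\overline{\pi(B)}$ all go through, using only \Cref{prop:semi-analytic-set-properties}. For the complement you correctly identify Gabrielov's theorem as the crux, but what you offer there is a citation of the uniformization/rectilinearization route rather than an argument; that is on par with what the paper does (cite), but it should be labelled as such rather than presented as a step of the proof.

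The genuine gap is in the projection clause. Your local reduction $\pi_i(A\cap U)=\pi'(B)$ only describes the part of $\pi_i(A)$ coming from the bounded piece of $A$ inside $U$, and the repair you propose --- cover ``the fiber $\pi_i^{-1}(y)\cap A$ inside a bounded cylinder by finitely many such neighborhoods'' --- does not work: to verify sub-analyticity of $\pi_i(A)$ at $y$ you must account for \emph{all} points of $A$ lying over a neighborhood of $y$, and these fill the unbounded cylinder $\pi_i^{-1}(V)$; the deleted coordinate $x_i$ may tend to infinity while $\pi_i(x)$ stays near $y$, and no finite subcover captures those contributions. This is not a presentational defect, because the clause as literally stated is false without an extra hypothesis: the set $A=\{(k,1/k):k\in\mathbb{N}\}\subseteq\R^2$ is closed, discrete and locally finite, hence sub-analytic (even semi-analytic), yet $\pi_1(A)=\{1/k:k\in\mathbb{N}\}$ accumulates at $0$ and is not sub-analytic there, since near $0$ it is not a finite union of points and intervals. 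What the cited source actually proves, and what any correct proof must assume, is that the image is sub-analytic when $A$ is bounded (relatively compact), or more generally when the projection restricted to $\bar{A}$ is proper; in that setting your bounded-cylinder covering argument becomes legitimate. You should therefore state the clause in that form and, when it is invoked later (e.g.\ in \Cref{lemma:sub-analytic-first-order-formula} and \Cref{lemma:f-is-sub-analytic}), check that the sets being projected are bounded or that the relevant projections are proper, which is where the compactness of $\Omega$ enters.
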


For a mapping $f:\Omega\to \R^m$ defined on $\Omega\subseteq \R^n$, the \emph{graph} of $f$, denoted by $\mathrm{graph} f$, is defined by
$$\mathrm{graph} f:=\{(x,y):x\in \Omega,y=f(x)\}\subseteq\R^{n+m}.$$
We say $f$ is a sub-analytic mapping, if $\mathrm{graph} f$ is a sub-analytic subset of $\R^{n+m}$. Similarly we say a function $f:\Omega\to \R$ is sub-analytic if $\mathrm{graph} f$ is a sub-analytic subset of $\R^{n+1}$. We will use the following facts:
\begin{proposition}[\cite{denkowska1979certaines}]\label{prop:sub-analytic-mapping-properties}
\begin{itemize}
    \item If $f:A\to B$ is a sub-analytic mapping, and $E$ is a bounded sub-analytic subset of $A$, then $f(E)$ is a sub-analytic set.
    \item If $f:A\to B,g:B\to C$ are both sub-analytic mappings, and $f$ maps bounded sets to bounded sets, then the composition $g\circ f$ is sub-analytic.
    \item Let $A$ be a sub-analytic set. A function $f:A\to\R$ is sub-analytic if and only if the epigraph of $f$, defined as $\mathrm{epi}f:=\{(x,y)\in A\times\R:y\geq f(x)\}$, is sub-analytic.
\end{itemize}
\end{proposition}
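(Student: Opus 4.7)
The plan is to reduce each of the three assertions to the closure properties already collected in Proposition \ref{prop:sub-analytic-set-properties}, combined with one further fact that follows from the very definition of sub-analyticity: the projection of a sub-analytic set remains sub-analytic provided the projection is locally proper on that set (i.e.\ fibres over compact sets remain bounded). With that tool in hand, each item becomes an exercise in writing the relevant set as a projection of a suitable intersection of Cartesian products of graphs, and then checking that the boundedness hypothesis supplies the required local properness.

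For item (i) I would write
\[f(E) \;=\; \pi_{B}\!\left(\mathrm{graph}\, f \,\cap\, (E \times B)\right),\]
where $\mathrm{graph}\, f$ is sub-analytic by hypothesis and $E \times B$ is sub-analytic by the Cartesian product property; their intersection is sub-analytic by Proposition \ref{prop:sub-analytic-set-properties}, and boundedness of $E$ makes $\pi_B$ locally proper, so $f(E)$ is sub-analytic. For item (ii) the analogous formula is
\[\mathrm{graph}(g\circ f) \;=\; \pi_{A,C}\!\left((\mathrm{graph}\, f \times C) \cap (A \times \mathrm{graph}\, g)\right),\]
and the hypothesis that $f$ sends bounded sets to bounded sets is exactly what makes this projection locally proper: over any compact subset of $A \times C$ the corresponding $B$-coordinate fibres lie inside the $f$-image of a bounded subset of $A$, which is bounded.

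For item (iii) I would handle the two directions separately. The forward direction is straightforward: the epigraph is the image of $\mathrm{graph}\, f \times [0,\infty)$ under the analytic map $(x, y', t) \mapsto (x, y'+t)$, which sends bounded sets to bounded sets, so applying item (i) locally shows $\mathrm{epi}\, f$ is sub-analytic. The converse is the most delicate step. I would recover the graph as $\mathrm{graph}\, f = \mathrm{epi}\, f \setminus S$, where
\[S \;=\; \{(x,\, y+t) : (x,y)\in \mathrm{epi}\, f,\ t>0\}\]
is the strict epigraph, itself a projection of a sub-analytic set. Because sub-analytic sets are closed under complements and finite intersections (Proposition \ref{prop:sub-analytic-set-properties}), $\mathrm{graph}\, f$ is sub-analytic and hence $f$ is sub-analytic as a function.

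The main obstacle I expect is checking that every projection used above is in fact locally proper. For item (ii) the boundedness hypothesis on $f$ is exactly tailored for this; for the converse of (iii) one has to confirm that the shift map restricted to bounded windows has bounded $t$-fibres, which follows once $t$ is confined by the diameter of the chosen window. Care with this local-finiteness bookkeeping, rather than any deeper analytic input, is the real content of the argument.
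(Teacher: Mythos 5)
First, note that the paper does not prove this statement at all: it is quoted as a known result of Denkowska--{\L}ojasiewicz--Stasica (following the paper's convention that propositions are imported from the literature), so there is no internal proof to compare against; what matters is whether your reconstruction stands on its own. Your treatment of the first two items does: writing $f(E)$ and $\mathrm{graph}(g\circ f)$ as projections of intersections of products of graphs is the standard argument, and in both cases the boundedness hypotheses ($E$ bounded, respectively $f$ mapping bounded sets to bounded sets) really do give the needed fibre-boundedness. Be aware, though, that your key tool --- ``a projection with bounded fibres over bounded sets preserves sub-analyticity'' --- does not literally ``follow from the very definition''; it needs the small finite-cover argument (cover the closure of the bounded piece $S\cap\pi^{-1}(V)$ by finitely many neighbourhoods from the definition and take the finite union of projected bounded semi-analytic sets). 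That is routine, but it should be said.

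Item (iii) as you argue it has a genuine gap: the local properness you invoke is simply not available when $f$ is unbounded below on bounded subsets of $A$, and nothing in the hypotheses rules this out. In the forward direction, the fibre of $(x,y',t)\mapsto(x,y'+t)$ over a bounded window consists of points $(x,f(x),t)$ with $t=z-f(x)$; the window bounds $x$ and $z$ but not $f(x)$, so $t$ can blow up (already for $f(x)=-1/x$ on $(0,1)$, near the point $(0,0)$, the fibres are unbounded even though $\mathrm{epi}f$ happens to be sub-analytic there --- so your argument fails while the conclusion holds). The same problem hits the converse: in your strict epigraph $S=\{(x,y+t):(x,y)\in\mathrm{epi}f,\ t>0\}$ only $y+t$ is confined to the chosen window, not $y$, so your closing claim that ``$t$ is confined by the diameter of the window'' is false; $y$ may be any number $\geq f(x)$, hence arbitrarily negative, with $t$ correspondingly large. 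To repair the proof you need an extra hypothesis such as local boundedness (e.g.\ continuity) of $f$, under which both shift maps become locally proper and your scheme goes through --- and this suffices for every use in the paper, since in \Cref{lemma:f-is-sub-analytic} the function $f$ is continuous on the compact set $\Omega$ and $G_f$ is bounded by a Lipschitz constant; alternatively one must give a genuinely different argument for the unbounded case, which your current projection formulas do not provide.
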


An important observation in \cite{coste2000introduction} is that ``all first-order formula is definable'', in other words, sub-analytic sets are able to express quantifiers.

\begin{lemma}\label{lemma:sub-analytic-first-order-formula}
If $A\subseteq \R^n,B\subseteq \R^m,C\subseteq \R^{m+n}$ are sub-analytic sets, then $$\begin{aligned}
\{x\in A:\exists y\in B,(x,y)\in C\}&\text{ and}\\
\{x\in A:\forall y\in B,(x,y)\in C\}&
\end{aligned}$$ are both sub-analytic sets.
\end{lemma}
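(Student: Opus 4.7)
The plan is to reduce the universal-quantifier case to the existential case via De Morgan's law, so the substantive work is confined to the existential statement, which I would establish by a direct projection argument using the closure properties catalogued in \Cref{prop:sub-analytic-set-properties}.

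For the existential case, I would first form $D:=(A\times B)\cap C\subseteq\R^{n+m}$. The Cartesian product $A\times B$ is sub-analytic by the fourth bullet of \Cref{prop:sub-analytic-set-properties}, and intersecting with the sub-analytic set $C$ preserves sub-analyticity by the second bullet. Next, I would apply the projection $\pi:\R^{n+m}\to\R^n$ onto the first $n$ coordinates, writing it as an $m$-fold composition of the canonical coordinate-dropping maps $\pi_i$; each $\pi_i$ preserves sub-analyticity by the last bullet of \Cref{prop:sub-analytic-set-properties}, so $\pi(D)$ is sub-analytic. Unpacking definitions,
\[
\pi(D)=\{x\in\R^n:\exists y\in\R^m,\ (x,y)\in (A\times B)\cap C\}=\{x\in A:\exists y\in B,\ (x,y)\in C\},
\]
which finishes the existential case.

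For the universal case, I would invoke the set-theoretic identity
\[
\{x\in A:\forall y\in B,\ (x,y)\in C\}\;=\;A\setminus \bigl\{x\in A:\exists y\in B,\ (x,y)\in C^c\bigr\},
\]
where $C^c:=\R^{n+m}\setminus C$. Since sub-analytic sets are closed under complementation (third bullet of \Cref{prop:sub-analytic-set-properties}), $C^c$ is sub-analytic. The inner existential set is then sub-analytic by the case already handled, and the outer set-difference is sub-analytic because complementation and intersection both preserve sub-analyticity.

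The main obstacle is not in the lemma itself but in the background machinery it black-boxes. The closure of sub-analytic sets under projection requires the relative-compactness maneuvering implicit in the definition, and the closure under complementation is Gabrielov's theorem, which is genuinely deep. Once these are accepted as in \Cref{prop:sub-analytic-set-properties}, however, the remaining argument is purely combinatorial: first-order formulae built from sub-analytic predicates, using $\exists,\forall,\land,\lor,\lnot$, are themselves sub-analytic, and this lemma is just the quantifier step of that induction.
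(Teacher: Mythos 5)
Your proof is correct and follows essentially the same route as the paper: the existential set is written as $\pi_{n+m,n}((A\times B)\cap C)$ and the universal set as $A\setminus\pi_{n+m,n}((A\times B)\setminus C)$, with everything reduced to the closure properties in \Cref{prop:sub-analytic-set-properties}. Your extra care in factoring the projection into single-coordinate maps $\pi_i$ is a harmless refinement of the same argument.
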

\begin{proof}
Observe that $\{x\in A:\exists y\in B,(x,y)\in C\}=\pi_{n+m,n}((A\times B)\cap C)$. As sub-analytic sets are closed under complement operation, and therefore under set minus operation, we can similarly write $\{x\in A:\forall y\in B,(x,y)\in C\}$ as $A\setminus\pi_{n+m,n}((A\times B)\setminus C)$. By \Cref{prop:sub-analytic-set-properties}, we immediately get that $\{x\in A:\exists y\in B,(x,y)\in C\}$ and $\{x\in A:\forall y\in B,(x,y)\in C\}$ are both sub-analytic.
\end{proof}

Now we consider $f$ defined on a compact semi-analytic set $\Omega\subseteq \R^n$, as previously described. We show that:
\begin{lemma}\label{lemma:f-is-sub-analytic}
$f(x)$ and $G_f(x):=\inf_{s}\frac{\partial f(x)}{\partial s}=\inf_{x'\neq x}\frac{Df(x,x')}{\norm{x'-x}}$ are both sub-analytic functions.
\end{lemma}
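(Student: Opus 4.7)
The plan is to establish sub-analyticity of $f$ directly from the canonical form and then reduce the sub-analyticity of $G_f$ to that of the graph of the Dini directional derivative by invoking the first-order definability encoded in \Cref{lemma:sub-analytic-first-order-formula}.

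For $f$ itself, I would proceed inductively through the canonical form $z_i := f_i(x, |z_1|, \ldots, |z_{i-1}|)$. Each analytic $f_i$ on the compact set $\Omega_i$ has a semi-analytic (hence sub-analytic) graph, and the absolute value $|\cdot|:\R\to\R$ has semi-analytic graph. Because $\Omega$ is compact, every intermediate mapping sends bounded sets to bounded sets, so the composition closure from \Cref{prop:sub-analytic-mapping-properties} applies: the auxiliary map $x\mapsto(x,|z_1(x)|,\ldots,|z_{i-1}(x)|)$ is sub-analytic, and composing with $f_i$ gives $z_i$. Iterating yields that $f=z_m$ is sub-analytic.

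For $G_f$, the core technical step is to show that $D(x,x'):=Df(x,x')$ has a sub-analytic graph on $(\Omega\times\Omega)\setminus\Delta$, with $\Delta$ the diagonal. I would partition this domain by \emph{eventual signature}: for each $s\in\{-1,0,1\}^{m-1}$ set
$$C_s:=\{(x,x'):x\neq x',\ \exists \alpha_0\in(0,1],\ \forall \alpha\in(0,\alpha_0],\ \sigma(x+\alpha(x'-x))=s\}.$$
Each map $(x,x',\alpha)\mapsto z_i(x+\alpha(x'-x))$ is sub-analytic by the previous paragraph, so each set $\{(x,x',\alpha):\sigma_i(x+\alpha(x'-x))=s_i\}$ is sub-analytic; by \Cref{lemma:sub-analytic-first-order-formula} (with the quantifier variable $\alpha$ ranging in the bounded sub-analytic set $(0,1]$, so that all projections stay within the sub-analytic framework), $C_s$ is sub-analytic. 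Because $\alpha\mapsto z_i(x+\alpha(x'-x))$ is piecewise analytic (with pieces indexed by the signs of prior $z_j$), its sign stabilizes as $\alpha\downarrow 0$, so $\bigcup_s C_s=(\Omega\times\Omega)\setminus\Delta$. On $C_s$, $f(x+\alpha(x'-x))=f^s(x+\alpha(x'-x))$ for small $\alpha>0$, and $f^s$ is analytic on $\bar{Q}_s$, giving $D(x,x')=\grad f^s(x)(x'-x)$; the corresponding piece of $\mathrm{graph}(D)$ is semi-analytic. Hence $\mathrm{graph}(D)$ is a finite union of sub-analytic sets, and so sub-analytic.

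Finally, I would unfold the epigraph of $G_f$ via the equivalence
$$G_f(x)>t\iff\exists \varepsilon>0,\ \forall x'\in\Omega\setminus\{x\},\ D(x,x')>(t+\varepsilon)\norm{x'-x},$$
which is a first-order formula over sub-analytic ingredients (the graph of $D$, the sub-analytic function $\norm{x'-x}$, and the sub-analytic set $\R_{>0}$). By \Cref{lemma:sub-analytic-first-order-formula} the strict hypograph $\{G_f>t\}$ is sub-analytic, its complement $\mathrm{epi}\,G_f$ is sub-analytic by \Cref{prop:sub-analytic-set-properties}, and the third bullet of \Cref{prop:sub-analytic-mapping-properties} delivers sub-analyticity of $G_f$. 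The main obstacle I anticipate is the step showing $C_s$ is sub-analytic and that $\bigcup_s C_s$ covers the off-diagonal product: this forces careful bookkeeping so that every existential or universal quantifier in the definition of $C_s$ ranges over a bounded sub-analytic parameter set, together with the analytic argument that signs of $z_i$ along a line are eventually constant near $0$.
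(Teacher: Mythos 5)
Your argument for the sub-analyticity of $f$ itself is the same induction through the canonical form that the paper uses, but your treatment of the Dini derivative takes a genuinely different route. The paper never decomposes the domain: it encodes the $\epsilon$--$\delta$ definition of the limit directly as a first-order formula, writing $\mathrm{graph}\,Df$ as $\{(x,x',\delta):\forall\epsilon>0,\exists r>0,\forall t>0,\ t>r\vee(\delta-\epsilon)t<f(x+t(x'-x))-f(x)<(\delta+\epsilon)t\}$ and invoking \Cref{lemma:sub-analytic-first-order-formula} once, with the sub-analyticity of $f$ as the only input; the epigraph of $G_f$ is then written as a single existential formula. You instead partition $(\Omega\times\Omega)\setminus\Delta$ by eventual signature, prove sub-analyticity of each $C_s$, and use the closed form $Df(x,x')=\grad f^s(x)(x'-x)$ on $C_s$. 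This works within the paper's framework (the sign-stabilization induction along a ray is sound because each $z_i(x+\alpha(x'-x))$ agrees near $\alpha=0^+$ with an analytic function of $\alpha$ once the earlier signs are frozen, and $x\in\bar{Q}_s$ guarantees $f(x)=f^s(x)$), and it buys extra information — an explicit piecewise-analytic formula for $Df$ — at the cost of noticeably more bookkeeping than the paper's one-line formula. Three small points to tidy: the pieces of $\mathrm{graph}(Df)$ over $C_s$ are sub-analytic rather than semi-analytic, since $C_s$ is only sub-analytic; like the paper, you are implicitly assuming the segment from $x$ toward $x'$ stays in $\Omega$ for small $\alpha$ (the paper states this convention explicitly); and your quantifier $\exists\varepsilon>0$ (and the paper's own $\forall\epsilon>0,\exists r>0,\forall t>0$) ranges over an unbounded parameter set, which is admissible under the paper's statement of \Cref{lemma:sub-analytic-first-order-formula} but is worth restricting to a bounded window such as $\varepsilon\in(0,1]$ if you want to be safe against the boundedness caveats in the underlying projection theorems. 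The final passage from the strict hypograph $\{G_f>t\}$ to $\mathrm{epi}\,G_f$ by complementation and then to sub-analyticity of $G_f$ via \Cref{prop:sub-analytic-mapping-properties} is correct and only cosmetically different from the paper's direct description of the epigraph.
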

\begin{proof}
Recall that $f(x)$ is defined with $f_1,\cdots,f_m$, by the evaluation process that for $i=1,\cdots,m$, $z_i(x):=f_i(x,|z_1(x)|,|z_2(x)|,\cdots,|z_{i-1}(x)|)$, and $f(x):=z_m(x)$.

$z_1(x)=f_1(x)$ is analytic on $\Omega$, so $\mathrm{graph} z_1$ is an semi-analytic subset of $\R^{n+1}$, thus $z_1(x)$ is sub-analytic. Clearly the absolute-value function is also sub-analytic. Additionally, because of continuity, each $z_i(x)$ is bounded on compact set $\Omega$. Therefore if $z_1(x),\cdots,z_{k}(x)$ are sub-analytic, by \Cref{prop:sub-analytic-mapping-properties} it follows that $z_{k+1}(x)=f_{k+1}(x,|z_1(x)|,|z_2(x)|,\cdots,|z_{k}(x)|)$ is also sub-analytic. By induction we get that $f(x)=z_m(x)$ is a sub-analytic function.

Then with the help of \Cref{lemma:sub-analytic-first-order-formula}, we prove that $G_f(x)$ is a also sub-analytic function. For convenience, we first show that the Dini derivative $Df(\cdot,\cdot)$ is sub-analytic. To keep the notations simple, we assume that every direction of $x$ in $\Omega$ can be expressed by a segment starting at $x$ and contained in $\Omega$. We write $\mathrm{graph}{Df(\cdot,\cdot)}$ as a first-order logic expression
\begin{gather*}
    \{(x,x',\delta)\in\Omega\times\Omega\times\R:\\\forall \epsilon>0,\exists r>0,\forall t>0, t>r\vee(\delta-\epsilon)t< f(x+t(x'-x))-f(x)<(\delta+\epsilon)t\},
\end{gather*}
which is sub-analytic. Hence $Df(\cdot,\cdot)$ is sub-analytic.

The epigraph of $G_f(x)$, 
\[\mathrm{epi} G_f:=\left\{(x,\delta)\in\Omega\times\R:\delta\geq  G_f(x)=\inf_{x'\neq x}\frac{Df(x,x')}{\norm{x'-x}}\right\}\]
can be written as 
\[\{(x,\delta)\in\Omega\times\R:\exists x'\in \Omega, \norm{x'-x}>0\wedge Df(x,x')\leq \delta\norm{x'-x}\}.\]
Therefore $G_f(x)$ is a sub-analytic function, by \Cref{prop:sub-analytic-mapping-properties}.
\end{proof}

We are ready to prove our characterization result on stationary points in \Cref{prop:geometric-char-of-sp}. Let $S:=\{x\in\Omega:x\text{ is a stationary point of }f\}$, we have
\[S=\{x\in\Omega: G_f(x)\geq 0\}=\{x\in\Omega:\forall y<0, G_f(x)>y\}.\]
By \Cref{lemma:f-is-sub-analytic} and \Cref{lemma:sub-analytic-first-order-formula}, $S$ is a sub-analytic set.

By the theorem on sub-analytic stratification \cite{denkowska1979certaines,lojasiewicz1995semi} similar to \Cref{thm:semi-analytic-stratification}, we know that $S$ can be decomposed as the union of a finite collection $\{S_i\}$ of sub-analytic subsets, so that each $S_i$ is a connected analytic manifold.

Note that $S\subseteq P$, where $P$ is the semi-analytic set described at the beginning of this section. So each $S_i$ must be a subset of some connected component $P'$ of $P$, and therefore $\bar{S_i}\subseteq \bar{P'}$. Recall that $f$ is constant on the closure of every connected component of $P$, therefore for any $x_1,x_2\in \bar{S_i}$, we have $f(x_1)=f(x_2)$.

Suppose $x^*\in\Omega$ is a non-isolated local minimum point of $f$, then there exists a connected neighborhood $U$ of $x^*$, so that $f(x)\geq f(x^*)$ for all $x\in U$. $x^*$ is clearly a stationary point, so $x^*\in S$. $S$ has finitely many connected components. Furthermore, for each connected component $S'$, $f$ is constant on $\bar{S'}$. If $x^*\in \bar{S'}$, we have $f(x)=f(x^*)$ for $x\in S'$; otherwise, we have $\inf_{x\in S'}\lVert x-x^*\rVert>0$. Therefore there is $r>0$ such that for all $x\in S\cap B(x^*,r)$, $f(x)=f(x^*)$. Let $U^*=U\cap B(x^*,r)$, then for every $x\in S\cap U^*$, there is a neighborhood $U'$ of $x$, such that $U'\subseteq U$. Then we have $f(x)=f(x^*)\leq \inf_{x'\in U'}f(x')$, so $x$ is also a local minimum point. Since $x^*$ is a non-isolated local minimum point, $S\cap U^*$ is a connected infinite set containing $x^*$. Thus every $x\in S\cap U^*$ is a non-isolated local minimum point. Now we complete the proof.

\begin{remark}
The set of all limit-stationary points in $\Omega$, denoted by $S^{\lim}$, can be written as
\[S^{\lim}=\{x\in\Omega:\forall\delta>0,\forall r>0,\exists x'\in\Omega,\lVert x-x'\rVert<r\wedge G_f(x')>-\delta\}.\]
Thus all limit-stationary points also form a sub-analytic set, and have similar geometric properties to those stated in \Cref{prop:geometric-char-of-sp}.
\end{remark} 

\section{Convergence Stability}\label{sec:convergence-stability}
In this section, we use \Cref{thm:analytic-abs-delta-SP-converge} to study a natural stability concept called convergence stability and prove \Cref{thm:analytic-max-is-convergence-stable}. 

We consider a specific optimization algorithm $\A$ for function $f$. $\A$ chooses some initial point $x_0$ and makes many steps of an optimization procedure. Each step yields a point $x_k$. When certain conditions are satisfied, $\A$ terminates; otherwise, $\A$ never stops. Although a perfect algorithm should find an exact stationary point, it is impossible due to limited precision and time. Thus, most algorithms aim to find an approximate stationary point \cite{bagirov2014introduction}. Precisely, given a precision $\delta$ as the input parameter, a reasonable algorithm should
\begin{itemize}
    \item either terminate at a $\delta$-stationary point,
    \item or generate an infinite point sequence $\{x_k\}$ with $\delta$-stationary points as its cluster\footnote{A cluster point $x'$ of set $\{x_k\}$ is defined as for any $r>0$, for infinitely many $k$, $x_k\in B(x',r)\setminus\{x'\}$.} points.
\end{itemize}

Our goal is to define a stability concept about approximate stationary points. Naturally, stability means a $\delta$-stationary point should be close to a stationary point when $\delta$ is close to $0$; otherwise, a slight difference on $\delta$ may lead to a significant change of the found solution. Such changes are ubiquitous, especially in nonconvex optimization problems. Moreover, the changes may result in a worse $f$ value or other worse properties (e.g., in deep learning, test accuracies vary for models even with the same training loss). 

To focus on $\delta$-stationary points near a certain stationary point, we only inspect a neighborhood of the stationary point $x^*$. When the optimization algorithm $\A$ starts from a point $x_0$ at the stationary point $x^*$, the following two statements both describe the stability of a stationary point $x^*$:
\begin{itemize}
    \item For any sufficiently small precision parameter $\delta$, $\A$ will find a $\delta$-stationary point \emph{near} $x^*$  regardless of the choices of the initial point in the neighborhood of $x^*$.
    \item Any $\delta$-stationary point in the neighborhood gets close to $x^*$ when $\delta\to 0$.
\end{itemize}

This observation motivates the formal definition of convergence stability. Specifically, for algorithm $\A$, we denote the starting point by $x_0$, and the point after the $k$-th iteration by $x_k$. For convenience, we assume that if $\A$ finds a $\delta$-stationary point $x_{k_0}$, then for all $k>k_0$, it sets $x_k=x_{k_0}$, so that $\{x_k\}$ is always an infinite sequence. A \emph{limit point} $x'$ of $\{x_k\}$ is then defined as: For any $r>0$, for infinitely many $k$, $x_k\in B(x',r)$.

We need to restrict types of optimization algorithms to obtain nontrivial results. In many practical scenarios \cite{bagirov2014introduction}, we observe the following three properties:

\begin{enumerate}
    \item The objective function $f(x)$ decreases after each iteration, i.e., for all $k=0,1,2,\cdots$, 
    \[f(x_{k+1})\leq f(x_{k}).\]
    \item Every limit point of $\{x_k\}$ is a $\delta$-stationary point, where $\delta$ can be controlled by $\mathcal A$'s input parameters.
    \item The step size $\norm{x_k-x_{k+1}}$ is limited, $i.e.$, there is $\lambda>0$ such that for all $k=0,1,2,\cdots$, $\norm{x_k-x_{k+1}}\leq\lambda$. If we assume $f(x)$ is $L$-Lipschitz, then this implies that for any $\theta\in[0,1]$,
    \[f(\theta x_{k+1}+(1-\theta)x_{k})\leq f(x_{k})+L\lambda.\]
\end{enumerate}

Our analysis needs $\{x_k\}$ to have the following more relaxed properties.

\begin{definition}\label{def:commonalgorithmproperties}
We say $\{x_k\}_{k\geq 0}$ is 
\begin{itemize}
    \item \emph{ultimately decreasing}, if every limit point $\bar{x}$ of $\{x_k\}$ satisfies $f(\bar{x})\leq f(x_0)$.
    \item $\delta$-\emph{result-stationary}, if every limit point of $\{x_k\}$ is $\delta$-stationary point.
    \item $\lambda$-\emph{path-bounded}, if for any $ k=0,1,\cdots$ and any $\theta\in[0,1]$, $f(\theta x_{k+1}+(1-\theta)x_{k})\leq f(x_0)+\lambda$.
\end{itemize}
\end{definition}

We say an algorithm $\mathcal A$ has these properties, if the point sequence $\{x_k\}$ generated by $\mathcal A$ is guaranteed to always (or almost surely\footnote{Throughout this section, the term ``almost surely" (with respect to a probability measure) only accounts for the randomness inside the algorithm. Choices of initial points are not taken into consideration.}) satisfy these properties, regardless of the starting point $x_0$. The ultimate decrease property is very reasonable as long as an optimization algorithm does try to find a local minimum point. The $\delta$-result-stationary property is usually theoretically proved for many effective optimization algorithms. The $\lambda$-path-bounded property is usually satisfied in practice\footnote{For instance, we consider the descent optimization process on a continuous function $f$. By setting the step size sufficiently small, we will have path-bounded property due to uniform continuity by Cantor's theorem.}, and can be easily guaranteed for the descent optimization process on an $L$-Lipschitz function $f$ by limiting the step size $\lVert x_k-x_{k+1}\rVert< \lambda/L$. Furthermore, the $\lambda$-path-bounded property allows the analysis to tolerate a small magnitude of numerical errors. Hence such requirements on optimization algorithms are more practical.

Now we present the definition of convergence stability. We use notation $A_{\delta, \lambda}$ to denote an optimization algorithm satisfying ultimate decreasing, $\delta$-result-stationary and $\lambda$-path-bounded properties.

\begin{definition}\label{def:converge-stab}
A stationary point $x^*\in \Omega$ is called $(r_0,\lambda)$-\emph{convergence-stable} if for every $\varepsilon>0$, there exists $\delta>0$,
such that for any $\delta_1\in(0,\delta),\lambda_1\in(0,\lambda)$ and any initial point $x_0\in B(x^*,r_0)$, the point sequence $\{x_k\}_{k\geq 0}$ generated by an \emph{arbitrary} algorithm $\mathcal A_{\delta_1,\lambda_1}$ almost surely satisfies
\[\limsup_{k\to\infty}\norm{x_k-x^*}< \varepsilon.\]
If we simply say $x^*\in \Omega$ is \emph{convergence-stable}, it means that such $r_0$ and $\lambda$ exist but do not necessarily have an explicit form.
\end{definition}

The definition of convergence stability may be not intuitive, so we give a few simple examples to demonstrate it.

\begin{example}
Let $f(x)=x^2, x\in[-1,1]$. We want to show that stationary point $x^*=0$ is $(r_0,\lambda)$-convergence-stable for some $r_0$ and $\lambda$. $f'(x)=2x$. Thus $\delta$-stationary points forms an interval $[\Tilde{x}_-(\delta),\Tilde{x}_+(\delta)]$, where
\[\Tilde{x}_\pm(\delta)=\pm\frac{\delta}{2}.\]
By definition, any algorithm $\A_{\delta_1,\lambda_1}$ with $\delta_1\in(0,\delta)$, $\lambda_1>0$ and $x_0\in[-1,1]$ can find a solution in $[\Tilde{x}_-(\delta),\Tilde{x}_+(\delta)]$. As $\delta\to 0$, $\norm{\Tilde{x}_\pm(\delta)}\to 0$. Thus for any $r_0\in(0,1]$ and $\lambda>0$, $x^*=0$ is convergence-stable.
\end{example}

We also show a nonexample:
\begin{example}\label{ex:nonex-for-convergence-stable}
Let $f(x)=|x+1|+|x-1|, x\in[-2,2]$. We prove that $x^*=0$ is not convergence-stable. It suffices to note that any $\delta$-stationary point with $\delta<2$ lies in $(-1,1)$. Thus for any $r>0$ and $\lambda>0$, set $\epsilon=\min\{1,r/2\}$. An initial point $x_0=0.9r$ will make some algorithms get stuck and terminate. Thus $x^*$ fails the definition of convergence stability.
\end{example}

\Cref{ex:nonex-for-convergence-stable} implies that convergence stability is a property mainly determined by the stationary point $x^*$ itself. In other words, it makes the weakest assumptions on optimization algorithms and guarantees for \emph{any} ultimately decreasing, $\delta$-result-stationary and $\lambda$-path-bounded algorithm, the optimization solution cannot escape from a neighborhood of $x^*$ and the algorithm will always find a solution near $x^*$ once it goes into that neighborhood.

Now we turn to \Cref{thm:analytic-max-is-convergence-stable}. To demonstrate how this result comes about, we have the following reasoning. What kinds of stationary points could be convergence-stable? A saddle point cannot be stable: The descent process will never come back if we perturb it to a descent direction of the saddle point. So the stationary point must be a local minimum point. If a local minimum point is not isolated, i.e., encompassed by infinitely many stationary points, the optimization process may get stuck at another stationary point nearby. The two scenarios describe a necessary condition for convergence stability. Through \Cref{thm:analytic-abs-delta-SP-converge}, this is also proved to be the sufficient condition for function $f$. Now we state the formal proof of \Cref{thm:analytic-max-is-convergence-stable}.

\begin{proof}[Proof of \Cref{thm:analytic-max-is-convergence-stable}]
To avoid verbose notations, every notation $B(x,r)$ in the proof should be replaced by $B(x,r)\cap\Omega$. Such replacement does not affect the correctness of the proof.

\emph{Necessity}: Assume that $x^*$ is convergence-stable. Then there is some $r_0>0$ such that for any $x_0\in B(x^*,r_0)$ and $\varepsilon>0$, there exists sufficiently small $\delta,\lambda$ such that the sequence $\{x_k\}_{k\geq 0}$ generated by $\mathcal{A}_{\delta,\lambda}$ with initial point $x_0$ satisfies that every limit point $\bar{x}$ of $\{x_k\}$ is in $B(x^*,\varepsilon)$. As $\mathcal A_{\delta,\lambda}$ is ultimately decreasing, we have $f(\bar{x})\leq f(x_0)$. Since $\varepsilon$ can be arbitrarily small, by the continuity of $f$, we have $f(x^*)\leq f(x_0)$. Furthermore, $f(x^*)=\min_{x\in B(x^*,r_0)}f(x)$ since $x_0$ is arbitrarily chosen. So $x_0$ is a minimum point in $B(x^*,r_0)$.

To prove the necessity, we claim that for any $x'\neq x^*$ in $B(x^*,r_0)$, the strict inequality $f(x')>f(x^*)$ holds. Suppose otherwise that $f(x')\leq f(x^*)$, then combining it with $f(x^*)=\min_{x\in B(x^*,r_0)}f(x)$ we have $f(x')=\min_{x\in B(x^*,r_0)}f(x)$. It follows that $x'$ is a local minimum point. Then $\partial f(x')/\partial s\geq0$ holds for any valid direction $s$. Thus $x'$ is also a stationary point. But $x'$ cannot be a stationary point, because otherwise the descent procedure starting from $x'$ will stay at $x'$, which violates the assumption.

\emph{Sufficiency}: Assume that there is some $r_0>0$ such that 
\[f(x^*)=\min_{x\in B(x^*,r_0)}f(x),\]
and for $x'\neq x^*$ in $B(x^*,r_0)$, $f(x')>f(x^*)$.

By \Cref{thm:analytic-abs-delta-SP-converge}, without loss of generality, we may assume that $r_0$ is already sufficiently small, so that for any $\varepsilon>0$, there exists $\delta_0>0$, such that every $\delta_0$-stationary point in $B(x^*,r_0)$ is in $B(x^*,\varepsilon)$. For algorithm $\mathcal{A}_{\delta,\lambda}$ with parameter $\delta<\delta_0$, every limit point of the generated point sequence $\{x_k\}_{k\geq 0}$ is almost surely a $\delta_0$-stationary point. Thus we only need to prove that $x_k\in B(x^*,r_0)$ holds for all $k\geq 0$.

Take a subset of $B(x^*,r_0)$:
\[U_{[0.8r_0,0.9r_0]}:=\{x\in B(x^*,r_0):\lVert x-x^*\rVert\in[0.8r_0,0.9r_0]\}.\]

As $U_{[0.8r_0,0.9r_0]}$ is a compact set, we have $\bar{f}:=\min_{x\in U_{[0.8r_0,0.9r_0]}}f(x)>f(x^*)$. Let $\lambda_0:=\frac{\bar{f}-f(x^*)}2$, and let $\mathcal{A}_{\delta,\lambda}$ has parameter $\lambda<\lambda_0$.

By continuity of $f$, there exists $r_1\in(0,0.8r_0)$, such that for any $x\in B(x^*,r_1)$, $f(x)\leq f(x^*)+\lambda_0$. When the starting point $x_0$ is in $B(x^*,r_1)$, we claim that $x_k\in B(x^*,0.8r_0)$ for every $k>0$. That is because otherwise there exists $k_0>0$ and $\theta\in[0,1)$ such that $(1-\theta)x_{k_0}+\theta x_{k_0+1}\in U_{[0.8r_0,0.9r_0]}$. Therefore $f((1-\theta)x_{k_0}+\theta x_{k_0+1})\geq f(x^*)+2\lambda_0\geq f(x_0)+\lambda_0>f(x_0)+\lambda$, contradicting the fact that $\mathcal{A}_{\delta,\lambda}$ is $\lambda$-path-bounded.

In summary, we prove that for any $\varepsilon>0$, there exists $\delta_0>0$ such that for every $\delta\in(0,\delta_0)$, $\lambda\in(0,\lambda_0)$ and $x_0\in B(x^*,r_1)$, for $\mathcal{A}_{\delta,\lambda}$ starting from $x_0$, every limit point of the generated point sequence $\{x_k\}_{k\geq 0}$ is almost surely a $\delta_0$-stationary point in $B(x^*,r_0)$, so we have $x_k\in B(x^*,\varepsilon)$ for infinitely many $k$. Hence $x^*$ is $(r_1,\lambda_0)$-convergence-stable by definition.
\end{proof}

In the sufficiency proof of \Cref{thm:analytic-max-is-convergence-stable}, we actually present a quantitative approach to proving a stationary point to be $(r_1,\lambda_0)$-convergence-stable. We summarize it as the following corollary.
\begin{corollary}\label{prop:quan-result-of-convergence-stable}
Suppose $x^*$ is the unique minimum point of $f$ in $B(x^*,r)\cap\Omega$. Let
\[\lambda_0:=\frac{1}{2}\left(\min_{\norm{x-x^*}\in[0.8r,0.9r]} f(x)-f(x^*)\right),\]
and 
\[r_1:=\max\{r'\in(0,0.8r]:\text{For every } x\in B(x^*,r')\cap\Omega, f(x)\leq f(x^*)+\lambda_0\}.\]
Then $x^*$ is $(r_1, \lambda_0)$-convergence-stable.
\end{corollary}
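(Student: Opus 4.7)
The plan is to mirror the sufficiency direction of the proof of \Cref{thm:analytic-max-is-convergence-stable}, with the given $r$ playing the role of $r_0$ in that argument. Since that argument already contains all the essential ingredients, the corollary is mostly a matter of unpacking the implicit constants and confirming that the explicit formulas given here do the job.

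First I would observe that uniqueness of $x^*$ as a minimum on $B(x^*,r)\cap\Omega$ forces $x^*$ to be an isolated local minimum point, so \Cref{thm:analytic-abs-delta-SP-converge} applies. Without loss of generality, $r$ can be taken small enough that the radius furnished by \Cref{thm:analytic-abs-delta-SP-converge} is at least $r$; that is, for every $\varepsilon>0$ there exists $\delta_0>0$ such that every $\delta_0$-stationary point in $B(x^*,r)\cap\Omega$ lies in $B(x^*,\varepsilon)\cap\Omega$. Compactness of the closed annulus together with uniqueness yields $\min_{\norm{x-x^*}\in[0.8r,0.9r]}f(x)>f(x^*)$, so $\lambda_0>0$ is well-defined, and continuity of $f$ at $x^*$ ensures the set defining $r_1$ is nonempty, hence $r_1>0$.

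Next, fix $\varepsilon>0$, pick $\delta_0$ as above, and take any $\delta\in(0,\delta_0)$, $\lambda\in(0,\lambda_0)$, and $x_0\in B(x^*,r_1)\cap\Omega$. I would argue that the entire sequence $\{x_k\}$ generated by $\mathcal{A}_{\delta,\lambda}$ stays inside $B(x^*,0.8r)\cap\Omega$. Suppose not; then by connectedness of the line segment joining two consecutive iterates, there is some $k_0$ and $\theta\in[0,1]$ with $y:=(1-\theta)x_{k_0}+\theta x_{k_0+1}$ lying in the annulus $\norm{y-x^*}\in[0.8r,0.9r]$. Then $f(y)\geq f(x^*)+2\lambda_0\geq f(x_0)+\lambda_0>f(x_0)+\lambda$, contradicting $\lambda$-path-boundedness. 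Hence all iterates remain in $B(x^*,0.8r)\subseteq B(x^*,r)$, their almost-sure limit points are $\delta$-stationary by the $\delta$-result-stationary hypothesis, and \Cref{thm:analytic-abs-delta-SP-converge} places these limit points in $B(x^*,\varepsilon)$.

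The main subtlety, and the step I would verify most carefully, is the ``moat'' argument in the previous paragraph: the factor $1/2$ in the definition of $\lambda_0$ is precisely what buys the strict inequality $f(y)\geq f(x^*)+2\lambda_0>f(x_0)+\lambda$, since $f(x_0)\leq f(x^*)+\lambda_0$ holds by the definition of $r_1$ and $\lambda<\lambda_0$ by hypothesis. Any weaker choice of $\lambda_0$ would fail to close the gap, so both explicit formulas in the corollary statement are tight to the argument; aside from this tightness check, the proof is a direct specialization of the sufficiency half of \Cref{thm:analytic-max-is-convergence-stable}.
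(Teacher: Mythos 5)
Your argument is correct and is essentially the paper's own proof: the corollary is stated there precisely as a summary of the sufficiency half of \Cref{thm:analytic-max-is-convergence-stable}, and you reproduce that argument --- confinement of all iterates via the annulus $\norm{x-x^*}\in[0.8r,0.9r]$ together with $\lambda$-path-boundedness, then \Cref{thm:analytic-abs-delta-SP-converge} applied to the (almost surely $\delta$-stationary) limit points --- with the same role for the factor $1/2$ in $\lambda_0$. The one caveat, which you share with the paper's own treatment, is the step ``without loss of generality the radius furnished by \Cref{thm:analytic-abs-delta-SP-converge} is at least $r$'': since $\lambda_0$ and $r_1$ are computed from the given $r$, this is really an implicit smallness assumption on $r$ (namely that $B(x^*,0.8r)\cap\Omega$ contains no limit stationary point other than $x^*$, so that the theorem's conclusion holds on that ball) rather than a harmless normalization, exactly as in the paper's sufficiency proof.
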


Finally, we remark that since \Cref{thm:analytic-abs-delta-SP-converge} fails in \Cref{ex:nonex-for-converge-stable-C1} and \Cref{ex:nonex-for-converge-stable}, \Cref{thm:analytic-max-is-convergence-stable} fails as well in these examples. Indeed, an optimization algorithm may get stuck at a $\delta$-stationary point ($x_0$ in these examples) far away from the stationary point ($x^*$ in these example) and then terminate. Thus analyticity is also necessary for \Cref{thm:analytic-max-is-convergence-stable}.
\section{Discussion and Open Problems}\label{sec:discussion}
In this work, we prove the convergence lemma for piecewise analytic function $f$. Extending our technique, we characterize the geometric properties of the set of stationary points of $f$. Finally, we introduce convergence stability, and show an intuitive equivalent condition of convergence stability for $f$ via the convergence lemma. Our results reveal some understandings on nonconvex nonsmooth optimization, as well as the new optimization methodology of deep neural networks. Questions
raise themselves for future explorations. The next immediate open issues include the following
future topics.
\begin{enumerate}[fullwidth,listparindent=\parindent]
\item Further \emph{geometric} and \emph{qualitative} understandings on deep neural networks and nonsmooth nonconvex functions.
    
The main methods used to study deep neural networks and nonsmooth nonconvex functions are probability theory and analysis theory. Techniques such as NTK theory are developed to do convergence and performance analyses on various neural network models and optimizers \cite{DBLP:conf/icml/Allen-ZhuLS19,DBLP:conf/icml/DuLL0Z19,DBLP:conf/iclr/DuZPS19,DBLP:conf/nips/JacotHG18}. Theoretical analyses over gradient sampling methods are also developed \cite{Burke2002GradientSampling,BurkeRobustGradientSampling2005,DBLP:journals/siamjo/HosseiniU17,DBLP:journals/siamjo/Kiwiel07}. Many results are proved by heavy calculations and delicate inequalities. So these results are very quantitative and able to resolve problems for a specific kind of neural networks and a specific kind of optimization methods. Cooper's work \cite{DBLP:journals/simods/Cooper21} and our work provide another extremity: a qualitative understanding on very general neural networks via geometry theory. There are far fewer calculations. Even though not very quantitative, the results can apply to general neural networks and nonconvex functions. Combining understandings from both extremities, we will get a better understanding of the ``black magic'' behind deep neural networks and the difficulties arising from nonconvex nonsmooth optimization.
   
\item More \emph{quantitative} analysis on the convergence lemma and convergence stability.
   
Our work applies to very general object functions and optimization algorithms; the price for such generality is a lack of quantitative results. For strongly convex functions, the convergence lemma is quite accurate: The gradient norm is linear to the loss difference \cite{boyd2004convex}. Our result, however, only present the continuity property. As for convergence stability, if we wish to guarantee a $(\lambda,r)$-convergence-stable stationary point, how should we choose the step size and how large can $r$ be? \Cref{prop:quan-result-of-convergence-stable} presents a calculation method, but for most functions, such calculation is cumbersome. It would be interesting to find a specific class of important objective functions so that it is easier to do practical calculations on them.
   
\item \emph{Probabilistic considerations} on convergence stability.
   
Most previous theories, such as deep learning theory \cite{DBLP:conf/icml/Allen-ZhuLS19,DBLP:conf/icml/DuLL0Z19,DBLP:conf/iclr/DuZPS19,DBLP:conf/nips/JacotHG18} and smoothed analysis \cite{SpielmanCommunACM}, involve the probability into their settings. For instance, a minor error means a mean-zero Gaussian with a small variance in these theories, rather than a small positive radius in our work. Involving probability into settings would make further comparisons between these theories possible and thus help us more completely understand stability. Furthermore, probability is conducive to improving local results to global results.
\end{enumerate}

We hope our work makes a steppingstone toward a new understanding of the convergence theory and stability theory of optimization problems in general, especially of deep neural networks. 


\bibliographystyle{plain}
\bibliography{references}

\appendix
\section{Verifying (Counter-)\texorpdfstring{\Cref{ex:nonex-for-converge-stable}}{Example 3.2} for Convergence Stability}\label{app:nonexample-convergence-stable}
We do the verification step by step.
\begin{enumerate}[fullwidth, listparindent=\parindent]
    \item \emph{If $x>0$,
    \begin{align*}
    g'(x)&=x^{-2}e^{-1/x}\sin\left(1/x\right)-x^{-2}e^{-1/x},\\
    h'(x)&=x^{-2}e^{-1/x}\sin\left(1/x\right)-x^{-2}e^{-1/x}\cos\left(1/x\right).
\end{align*}}

    It follows by direct calculations.
    
    \item \emph{For both $g(x)$ and $h(x)$, the differential of every order at $x=0$ is $0$.}
    
    Note that for $x\neq 0$, the $n$th derivative of $g(x)$, denoted by $g^{(n)}(x)$, can be expressed in $\poly(1/x,\sin(1/x),\cos(1/x))e^{-1/|x|}$. Clearly $|g^{(n)}(x)/x|\to 0$ as $x\to 0$. Thus $g^{(n)}(0)=0$ for all $n$. The same fact about $h(x)$ can be proved by a similar argument.
    
    \item \emph{$g(x)$ and $h(x)$ belong to $C^\infty(\R)$.}
    
    It is clear that $g(x)$ and $h(x)$ are $C^\infty$ when $x\neq 0$. The case when $x=0$ is verified previously.
    
    \item \emph{$f(x)$ is finite compositions of $C^\infty$ mappings and the maximum function.}
    
    Note that $\min\{a,b\}=-\max\{-a,-b\}$, $|x|=\max\{x,-x\}$. In addition, $g(x)$ and $h(x)$ are $C^\infty$. The result follows.
    
    \item \emph{$f'_{-}(x)>0$ for $\in(0,1)$, so $f(x)$ is strictly increasing on $(0,1)$. These hold symmetrically on $(-1,0)$. So $x^*=0$ is the only stationary point in $B(0,1)$, and is the unique local minimum point.}
    
    Suppose $x>0$. $\sin(1/x)<0$ if and only if $x\in\left(\frac{1}{2k\pi}, \frac{1}{(2k-1)\pi}\right)$, $k=1,2,\ldots$ Thus
    \[f_-'(x)=\begin{cases}
    h'(x)-g'(x),& x\in\left(\frac{1}{2k\pi}, \frac{1}{(2k-1)\pi}\right),k=1,2,\ldots,\\
    -g'(x),& x\in\left(\frac{1}{2(k+1)\pi}, \frac{1}{2k\pi}\right), k=1,2,\ldots,\\
    \max\{0,h'(x)\}-g'(x),& x=\frac{1}{k\pi}, k=1,2,\ldots \end{cases}\]
    
    We discuss case by case to show that $f_-'(x)>0$.
    
    If $x\in\left(\frac{1}{2k\pi}, \frac{1}{(2k-1)\pi}\right)$, $f_-'(x)=h'(x)-g'(x)=x^{-2}e^{-1/x}(1-\cos(1/x))>0$.
    
    If $x\in\left(\frac{1}{2(k+1)\pi}, \frac{1}{2k\pi}\right)$, $f_-'(x)=-g'(x)=x^{-2}e^{-x}(1-\sin(1/x))>0$.
    
    If $x=\frac{1}{(2k-1)\pi}$, $f_-'(x)=h'(x)-g'(x)=2x^{-2}e^{-1/x}>0$.
    
    If $x=\frac{1}{2k\pi}$, $f_-'(x)=-g'(x)=x^{-2}e^{-1/x}>0$.
    
    Note that $f(x)$ is a symmetric function, so $f_+'(x)<0$ for $x\in(-1,0)$.
    
    \item \emph{\Cref{thm:analytic-abs-delta-SP-converge} does not hold for $x^*=0$}
    
    Observe that for $k=1,2\ldots$ and $t\in(0,\pi)$, 
\begin{align*}
    f'\left(\frac{1}{2k\pi-t}\right)&=(2k\pi-t)^2e^{2k\pi-t}(1-\cos(2k\pi-t))\\
    &=(2k\pi-t)^2e^{2k\pi-t}(1-\cos(t))\\
    &\leq \frac{1}{2}(2k\pi)^2e^{2k\pi}t^2=C(k)t^2.
\end{align*}
Take any $r_0>0$. Let $k=\left\lceil\frac1{2\pi r_0}\right\rceil+1$ and $\epsilon=\frac1{2k\pi+1}$. For any $\delta>0$, let $t=\min\{\sqrt{C^{-1}(k)\delta},1\}$. Set $x_0=\frac1{2k\pi-t}\in B(0,r_0)$. Then $x_0$ permanently because $x_0$ is a $\delta$-stationary point. However, $x_0=\frac1{2k\pi-t}\notin B(0,\epsilon)$.
\end{enumerate}

\end{document}